\documentclass[11pt]{article}

\usepackage{amsfonts,amsmath,amsthm,times,fullpage,amssymb}

\usepackage[hidelinks]{hyperref}

\usepackage{enumitem}

\usepackage[dvipsnames]{xcolor}

\usepackage[noend]{algpseudocode} 
\usepackage[nothing]{algorithm}  
\usepackage{caption}
\usepackage[normalem]{ulem}

\newtheorem{theorem}{Theorem} 
\newtheorem{lemma}{Lemma}
\newtheorem{definition}{Definition} 
 
\newtheorem{remark}{Remark}

\newtheorem*{harmonic}{Harmonic Walks}
\newtheorem*{regeneration}{Regeneration}
\newtheorem*{gLLL}{General LLL}

\newtheorem*{cd}{Causality Digraph}
\newtheorem*{causality}{Potential Causality}

\newcommand{\Indep}{\mathrm{Ind}}
\newcommand{\List}{\mathrm{List}}  
\newcommand{\Span}{\mathcal{S}}

\newcommand{\param}{\psi}

\begin{document}

\title{
A Local Lemma for Focused Stochastic Algorithms
\footnote{This paper is based on results that appeared in preliminary form in~\cite{Harmonic} and were improved, among other contributions, in~\cite{Kolmofocs}.}}

\author{
Dimitris Achlioptas
\thanks{Research supported by NSF grant CCF-1514128.}\\ 
Department of Computer Science\\ 
University of California Santa Cruz\\
\and 
Fotis Iliopoulos
\thanks{Research supported by NSF grant CCF-1514434.} \\ 
Department of Electrical Engineering and Computer Science\\ 
University of California Berkeley
\and 
Vladimir Kolmogorov
\thanks{Research supported by the European Research Council under the 
Seventh Framework Programme (FP7/2007-2013), ERC grant agreement no 616160.}\\ 
Institute of Science and Technology Austria
}

\date{\empty}

\maketitle

\begin{abstract}
We develop a framework for the rigorous analysis of focused stochastic local search algorithms. These are algorithms that search a state space by repeatedly selecting some constraint that is violated in the current state and moving to a random nearby state that addresses the violation, while hopefully not introducing many new ones. An important class of focused local search algorithms with provable performance guarantees has recently arisen from algorithmizations of the Lov\'{a}sz Local Lemma (LLL), a non-constructive tool for proving the existence of satisfying states by introducing a background measure on the state space. While powerful, the state transitions of algorithms in this class must be, in a precise sense, perfectly compatible with the background measure. In many applications this is a very restrictive requirement and one needs to step outside the class.  Here we introduce the notion of  \emph{measure distortion} and develop a framework for analyzing arbitrary focused stochastic local search algorithms, recovering LLL algorithmizations as the special case of no distortion. Our framework takes as input an arbitrary such algorithm and an arbitrary probability measure and shows how to use the measure as a yardstick of algorithmic progress, even for algorithms designed independently of the measure.
\end{abstract}

\newpage

\section{Introduction}\label{sec:intro}

Let $\Omega$ be a large, finite set of objects and let $F = \{f_1, f_2, \ldots, f_m \}$ be a collection of subsets of $\Omega$. We will refer to each $f_i \in F$ as a \emph{flaw} to express that all objects in $f_i$ have negative feature $i \in [m]$. For example, for a CNF formula on $n$ variables with clauses $c_1,c_2,\ldots,c_m$, for each clause $c_i$ we can define $f_i \subseteq \{0,1\}^n$ to comprise the truth assignments that violate $c_i$. Following linguistic rather than mathematical convention we will say that flaw $f$ is present in object $\sigma$ if $f \ni \sigma$ and that $\sigma \in \Omega$ is \emph{flawless} if no flaw is present in $\sigma$. 

The Lov\'{a}sz Local Lemma is a non-constructive tool for proving the \emph{existence} of flawless objects by introducing a probability measure $\mu$ on $\Omega$ and bounding from below the probability of simultaneously avoiding all (``bad") events corresponding to the flaws in $F$. (Below and throughout we assume that products devoid of factors evaluate to 1.)

\begin{gLLL}
Given events $A_1,\ldots,A_m$, for each $i \in [m]$, let the set $D(i) \subseteq [m] \setminus \{i\}$ be such that if $S \subseteq [m] \setminus (D(i) \cup \{i\})$, then $\mu(A_i \mid \cap_{j \in S} \overline{A_j}) = \mu(A_i)$. If there exist $\{\psi_i\}_{i=1}^m>0$ 
such that for all $i \in [m]$,
\begin{equation}\label{eq:LLL}
\frac{\mu(A_i)}{\psi_i}  \sum_{ S \subseteq \{i \} \cup D(i) }  \prod_{j \in S} \psi_j \le 1 \enspace , 
\end{equation}
then the probability that none of $A_1,\ldots,A_m$ occurs is at least $\prod_{i=1}^m 1/(1+\psi_i) > 0$. 
\end{gLLL}

Erd\H{os} and Spencer~\cite{LopsTrav} noted that independence in the LLL can be replaced by \emph{positive correlation}, yielding the stronger Lopsided LLL. The  difference is that each set $D(i)$ is replaced by a set $L(i) \subseteq [m] \setminus \{ i \}$ such that if $S \subseteq [m] \setminus (L(i) \cup \{i\})$, then $\mu(A_i \mid \cap_{j \in S} \overline{A_j}) \le \mu(A_i)$, i.e., ``=" is replaced by ``$\le$". Also, condition~\eqref{eq:LLL} is more well-known as $\mu(A_i) \le x_i \prod_{j \in D(i)} (1-x_j)$, where $x_i = \psi_i/(1+\psi_i)$. As we will see, formulation~\eqref{eq:LLL} better facilitates the statement of refinements of the condition. Specifically, considering graphical properties of the graphs on $[m]$ induced by the relationships $D(\cdot)$ and $L(\cdot)$, one  can show more permissive conditions, such as the cluster expansion~\cite{bissacot2011improvement}, the lefthanded~\cite{PegdenLLLL}, and Shearer's condition~\cite{Shearer}. 

Moser~\cite{moser}, later joined by Tardos~\cite{MT}, in groundbreaking work showed that a simple algorithm can be used to make the general LLL constructive when $\mu$ is a product measure. Specifically, in the \emph{variable setting} of~\cite{MT}, each event $A_i$ is determined by a set of variables $\mathrm{vbl}(A_i)$ so that $ j \in D(i)$ iff $\mathrm{vbl}(A_i) \cap \mathrm{vbl}(A_j) \ne \emptyset$. Moser and Tardos proved that in the variable setting, if~\eqref{eq:LLL} holds, then repeatedly selecting \emph{any} occurring event $A_i$ and resampling every variable in $\mathrm{vbl}(A_i)$ independently according to $\mu$, leads to a flawless object after a linear expected  number of resamplings. Pegden~\cite{PegdenIndepen} proved that this remains true under the weakening of~\eqref{eq:LLL} to the cluster expansion condition of Bissacott et al.~\cite{bissacot2011improvement}. Finally, Kolipaka and Szegedy~\cite{szege_meet} proved that the resampling algorithm actually works even under Shearer's tight condition~\cite{Shearer}. In an orthogonal development, Harris and Srinivasan in~\cite{SrinivasanPerm} were the first to make the LLL constructive outside the variable setting, giving an algorithmic LLL for the uniform measure on permutations.

Moser's original analysis of the resampling algorithm in the context of satisfiability~\cite{moser} inspired a parallel line of works that formed the so-called \emph{entropy compression} method, e.g.,~\cite{dujmovic2016nonrepetitive, Highrepetitive, acyclic}. In these works, the set of objects $\Omega$ typically does not have product structure, there is no measure $\mu$, and no general condition for algorithmic convergence. Instead, the fact that the algorithm under consideration must reach a flawless object (and, thus, terminate) is established by proving that the entropy of its trajectory grows more slowly than the rate at which it consumes randomness. The rate comparison is done in each case via a problem-specific counting argument.

A common feature of the resampling algorithm of~\cite{MT}, the swapping algorithm of~\cite{SrinivasanPerm}, and all algorithms analyzed by entropy compression, is that they are instances of focused stochastic local search. The general idea in stochastic local search is that $\Omega$ is equipped with a neighborhood structure, so that the search for flawless objects starts at some (flawed) object (state) and moves stochastically from state to state along the neighborhood structure. Restricting the search so that every transition away from a state $\sigma$ must target one of the flaws present in $\sigma$ is known as \emph{focusing} the search~\cite{papafocus}. The first effort to give a convergence condition for focused stochastic local search algorithms with arbitrary transition probabilities, i.e., not mandated by a background measure $\mu$, was the flaws/actions framework of~\cite{AIJACM}. As our work also uses this framework, below we recall some of the relevant definitions. Note that the existence of flawless objects is not presumed in any of these analyses. Instead, the idea is to establish the existence of flawless objects by proving that some focused stochastic local search algorithm (quickly) converges to one.

For $\sigma \in \Omega$, let $U(\sigma)$ denote the set of indices of the flaws present in $\sigma$, i.e., $U(\sigma) = \{ i \in [m]: f_i \ni \sigma \}$. For every $i \in U(\sigma) $,  let $A(i,\sigma) \ne \{ \sigma \}$ be a non-empty subset of $\Omega$. The elements of $A(i,\sigma)$ are called \emph{actions} and we consider the multi-digraph $D$ on $\Omega$ that has an arc $\sigma \xrightarrow{i} \tau $ for every $\tau \in A(i,\sigma)  $. We will consider walks on $D$ which start at a state $\sigma_1$ selected according to some probability distribution $\theta$, and which at each non-sink vertex $\sigma$ first select a flaw $f_i \ni \sigma$ as a function of the trajectory so far (focus), and then select as the next state $\tau \in A(i,\sigma) $ with probability $\rho_i(\sigma, \tau)$ . Whenever flaw $f_i \ni \sigma$ is selected we will say that flaw $f_i$ was \emph{addressed}. This will not necessarily mean that $f_i$ will be eliminated, i.e., potentially $A(i,\sigma) \cap f_i \ne \emptyset$. The multidigraph $D$ should  be thought of as implicitly defined by the algorithm we wish to analyze each time, not as explicitly constructed. Also, when we refer to running ``time", we will refer to the number of steps on $D$, without concern for exactly how long it takes to perform a single step, i.e., to identify a flaw present and select from its actions.

In this language, \cite{AIJACM} gave a sufficient condition for algorithmic convergence when:
\begin{enumerate}[label=(\alph*)]
\item\label{cond:atomicity}
$D$ is \emph{atomic}, i.e., for every $\tau \in \Omega$ and every $i\in [m]$ there exists at most one arc $\sigma \xrightarrow{i} \tau$.
\item\label{cond:uniformity}
$\rho$ assigns \emph{equal} probability to every action in $A(i,\sigma)$, for every $\sigma \in \Omega$ and $i \in U(\sigma)$.  
\end{enumerate}
By analyzing algorithms satisfying conditions~\ref{cond:atomicity} and~\ref{cond:uniformity}, several results that had been proved by custom versions of the LLL, and thus fell outside the algorithmization framework of~\cite{MT}, were made constructive and improved in~\cite{AIJACM}. At the same time, the convergence condition of~\cite{AIJACM} makes it possible to recover most results of the entropic method by generic arguments (sometimes with a small parameter loss). Finally, it is worth pointing out that even though the framework of~\cite{AIJACM} does not reference a background probability measure $\mu$, it captures a large fraction of the applications of general LLL. This is because when $\mu$ is uniform and bad events correspond to partial assignments, a very common scenario, the state transitions of the resampling algorithm of Moser and Tardos satisfy both conditions~\ref{cond:atomicity} and~\ref{cond:uniformity}. Overall, though, the convergence condition of~\cite{AIJACM} was incomparable with those of the LLL algorithmizations preceding it.

The long line of work on LLL algorithmizations that started with the groundbreaking work of Moser, culminated with the work of Harvey and Vondr\'{a}k~\cite{HV}. They showed that the Lopsided LLL can be made constructive even under the most permissive condition (Shearer's), whenever one can construct efficient \emph{resampling oracles}. Resampling oracles elegantly capture the common core of all LLL algorithmizations, namely that the state transitions, $(D,\rho)$, are perfectly compatible with the background measure $\mu$. Below we give the part of the definition of resampling oracles that exactly expresses this notion of compatibility, which we dub (measure) \emph{regeneration}.
\begin{regeneration}[Harvey-Vondr\'{a}k~\cite{HV}]
\label{def:loc_regen}
Say that $(D,\rho)$ \emph{regenerate $\mu$ at flaw $f_i$} if for every $\tau \in \Omega$,
\begin{equation}\label{eq:oracle}
\frac{1}{\mu(f_i)}\sum_{\sigma \in f_i} \mu(\sigma) \rho_{i}(\sigma,\tau) = \mu(\tau) \enspace .
\end{equation}
\end{regeneration}

Observe that the l.h.s.\ of~\eqref{eq:oracle} is the probability of reaching state $\tau$ after first sampling a state $\sigma \in f_i$ according to $\mu$ and then addressing $f_i$ at $\sigma$. The requirement that this probability equals $\mu(\tau)$ for every $\tau \in \Omega$ means that $(D,\rho)$ must be such that in every state $\sigma \in f_i$ the distribution on actions for addressing $f_i$ \emph{perfectly} removes the conditional $f_i \ni \sigma$. Of course, a trivial way to satisfy this requirement is to sample a new state $\sigma'$ according to $\mu$ in each step (assuming $\mu$ is efficiently sampleable). Doing this, though, foregoes any notion of iterative progress towards a goal, as the set of flaws present in $\sigma'$ are completely unrelated to those in $\sigma$. Instead, one would like to respect~\eqref{eq:oracle} while limiting the flaws introduced in $\sigma'$. To that end, we can consider the projection of the action digraph $D$ capturing which flaws may be introduced (caused) when we address each flaw. It is important to note that, below, potential causality is independent of flaw choice and that the causality digraph has an arc $i \to j$ if there exists \emph{even one} transition aimed at addressing $f_i$ that causes $f_j$ to appear in the new state. Naturally, the sparser this causality digraph, the better. 

\begin{causality}
For an arc $\sigma \xrightarrow{i}  \tau$ in $D$ and a flaw $f_j$ present in $\tau$ we say that $f_i$ causes $f_j$ if $f_i = f_j$ or $f_j \not\ni \sigma$. We say that $f_i$ \emph{potentially causes} $f_j$ if $D$ contains \emph{at least one} arc wherein $f_i$ causes $f_j$.
\end{causality}
\begin{cd}
The digraph $C=C(\Omega,F,D)$ on $[m]$ where $i \rightarrow j$ iff $f_i$ potentially causes $f_j$ is called the  causality digraph. The \emph{neighborhood} of a flaw $f_i$  is $\Gamma(i) =\{j : i \to  j \text{  exists in $C$}\}$.
\end{cd}
Harvey and Vondr\'{a}k~\cite{HV} proved that for essentially every lopsidependency digraph $L$ of interest, there exist resampling oracles whose causality digraph is (a subgraph of) $L$. We should emphasize, though, that there is no guarantee that these promised resampling oracles can be implemented efficiently, so as to yield an LLL algorithmization (and, naturally, in the absence of efficiency considerations the LLL is already ``algorithmic" by exhaustive search). Indeed, as we discuss below, there are settings in which the existence of efficient resampling oracles seems unlikely. That said, in~\cite{HV} Harvey and Vondr\'{a}k demonstrated the existence of efficient resampling oracles for a plethora  of LLL applications in the variable setting, the permutation setting, and several other settings.

Perhaps the simplest demonstration of the restrictiveness of resampling oracles comes from one of the oldest and most vexing concerns about the LLL (see the survey of Szegedy~\cite{mario_survey}). Namely, the inability\footnote{Naturally, whenever the set of flawless objects $\Omega^*$ is non-empty, the uniform measure on $\Omega^*$ demonstrates the existence of flawless objects. So, in a trivial sense, there is nothing that can not be established by the LLL. But, of course, anyone in possession of a description of $\Omega^*$ allowing the construction of a measure on it, does not need the LLL. Indeed, the whole point of the LLL is that it offers incredibly rich conclusions, e.g., $\Omega^* \neq \emptyset$, from extremely meager ingredients, e.g., the uniform measure on $\Omega$.} of the LLL to establish that a graph with maximum degree $\Delta$ can be colored with $ q = \Delta + 1$ colors. For example, if $\mu$ is the uniform measure on all $q^n$ colorings with $q$ colors, then every time a vertex $v$ is recolored, its color must be chosen uniformly among all colors, something that induces a requirement of $q > \mathrm{e}\Delta$ colors. If, instead, one could chose only among colors that do not currently appear in $v$'s neighborhood, then for all $q \ge \Delta+1$, the causality digraph is empty and rapid termination follows trivially. But it seems very hard to describe a probability measure $\mu$ and resampling oracles for it that respect the empty causality graph.

To recap, there are two ``schools of thought." In the first, one starts from the central object of the LLL, the measure $\mu$ on $\Omega$, and tries to design an algorithm that moves from one state to another in a manner that perfectly respects the measure. In the other, there is no measure on $\Omega$ at all and both the transitions and their probabilities can be, a priori, arbitrary. In this work, we bring these two schools of thought  together by introducing the notion of \emph{measure distortion}, showing, in particular, that the first school corresponds to the special case of no distortion. The main point of our work, though, is to demonstrate that the generality afforded by allowing  measure distortion has tangible benefits. Specifically, in complex applications, the requirement that every resampling must perfectly remove the conditional of the resampled bad event can be impossible to meet by short travel within $\Omega$, i.e., by ``local search". This is because small, but non-vanishing, correlations can travel arbitrarily far in the structure. Allowing measure distortion removes the requirement of perfect deconditioning, with any correlation seepage (distortion) is accounted for, via a local analysis. This makes it possible to design natural, local algorithms and prove rigorous mathematical statements about their convergence in the presence of long-range correlations. 
 
Concretely, we extend the flaws/actions framework of~\cite{AIJACM} to allow arbitrary action digraphs $D$, arbitrary transition probabilities $\rho$, and the incorporation of arbitrary background measures $\mu$, allowing us to connect the flaws/actions framework to the Lov\'{a}sz Local Lemma. Our work highlights the role of the measure $\mu$ in gauging how efficiently the algorithm rids the state from flaws, i.e., as a \emph{gauge of progress}, by pointing out the trade-off between distortion and the sparsity of the causality graph. The end result is a theorem that subsumes both the results of~\cite{AIJACM} and the algorithmization of the Lopsided LLL~\cite{HV} via resampling oracles, establishing a uniform method for designing and analyzing focused stochastic local search algorithms. Additionally, our work makes progress on elucidating the role of flaw choice in stochastic local search, and establishes several structural facts about resampling oracles.

\section{Statement of Results}\label{sec:results}

We develop tools for analyzing \emph{focused stochastic local search} algorithms. Specifically, we establish a sequence of increasingly general conditions under which such algorithms find flawless objects quickly, presented as Theorems~\ref{asymmetric},\ref{olala}, and \ref{lem:master}.  For the important special case of atomic action digraphs we identify structural properties of resampling oracles, presented as Theorem~\ref{atomic_oracles}. For the same setting we also derive a sharp analysis for the probability of any trajectory, elucidating the role of flaw choice, presented as Theorem~\ref{tight}. 

Theorems~\ref{asymmetric}--\ref{lem:master} differ in the sophistication of the flaw-choice mechanism they can accommodate. While in works such as~\cite{MT} on the variable setting and~\cite{SrinivasanPerm} on permutations, the setting was sufficiently symmetric that flaw choice could be arbitrary, in more complex applications more sophisticated flaw-choice is necessary. For example, to establish our results on Acyclic Edge Coloring we must use our recursive algorithm (Theorem~\ref{olala}), as the simple Markov walk (Theorem~\ref{asymmetric}), let alone arbitrary flaw choice, will not work. 

To demonstrate the flexibility of our framework, we derive a bound for Acyclic Edge Coloring of graphs with bounded degeneracy, a class including all graphs of bounded treewidth, presented as Theorem~\ref{Aecaki} in Section~\ref{AECARA}. To derive the result we rely heavily on the actions \emph{not} forming resampling oracles with respect to the measure used. Unlike other recent algorithmic work on the problem~\cite{acyclic, kirousis}, our result is established without ideas/computations ``customized" to the problem, but as a direct application of Theorem~\ref{olala}, highlighting its capacity to incorporate both global conditions, such as degeneracy, and sophisticated flaw-choice mechanisms, in this case a recursive procedure. We also show how to derive effortlessly an upper bound of $4.182 (\Delta-1)$ for Acyclic Edge Coloring of general graphs, which comes close to the hard-won bound of $4 (\Delta-1)$ of Esperet and Parreau~\cite{acyclic} via a custom analysis. Finally, we note that  Iliopoulos~\cite{LLLWTL} recently showed how our main theorem can be used to analyze the algorithm of Molloy~\cite{molloy2017list} for coloring triangle-graph graphs of degree up to the ``shattering threshold" for random graphs~\cite{mitsaras_barriers}.

\subsection{Setup}\label{sec:setup}
Recall that we consider algorithms which at each flawed state $\sigma$ select some flaw $f_i \ni \sigma$ to address and then select the next state $\tau \in A(i,\sigma)$ with probability $\rho_i(\sigma,\tau)$. As one may expect the flaw choice mechanism does have a bearing on the running time of such algorithms and we discuss this point in Section~\ref{sec:fc}. Our results focus on conditions for rapid termination that do not require sophisticated flaw choice (but can be used in conjunction which such choice). 

To measure a walk's capacity to rid the state of flaws we introduce a measure $\mu$ on $\Omega$, as in the LLL. Without loss of generality, and to avoid certain trivialities, we assume that $\mu(\sigma)>0$ for all $\sigma \in \Omega$.  The choice of $\mu$ is entirely ours and can be  oblivious, e.g., $\mu(\cdot) = |\Omega|^{-1}$. While  $\mu$ will typically assigns only exponentially small probability to flawless objects, it will allow us to prove that the walk reaches a flawless object in polynomial time with high probability. 

To do this we define a ``charge" $\gamma_i = \gamma_i(D,\theta,\rho,\mu)$ for each flaw $f_i \in F$ that captures the \emph{compatibility} between the actions of the algorithm for addressing flaw $f_i$ and the measure $\mu$.  Specifically, just as for regeneration, we consider the probability, $\nu_i(\tau)$, of ending up in state $\tau$ after (i) sampling a state $\sigma \in f_i$ according to $\mu$, and then (ii) addressing $f_i$ at $\sigma$. But instead of requiring that $\nu_i(\tau)$ \emph{equals} $\mu(\tau)$, as in resampling oracles, we allow $\nu_i(\tau)$ to be free and simply measure
\begin{align}
d_i =  \max_{\tau \in \Omega } \frac{\nu_i (\tau) }{\mu (\tau) }  \ge 1 \enspace ,
\end{align}
i.e., the greatest inflation of a state probability incurred by addressing $f_i$ (relative to its probability under $\mu$, and averaged over the initiating state $\sigma \in f_i$ according to $\mu$). The charge $\gamma_i$ of flaw $f_i$ is then defined as 
\begin{eqnarray}\label{eq:charges}
\gamma_i & := & d_i \cdot  \mu(f_i) \\
& = &  \max_{\tau \in \Omega}  \frac{1}{ \mu(\tau) } \sum_{ \sigma \in f_i } \mu( \sigma )  \rho_i(\sigma, \tau )   \enspace . 
\end{eqnarray}
To gain some intuition for $\gamma_i$ observe that if $\mu$ is uniform and $D$ is atomic, then $\gamma_i$ is simply the greatest transition probability $\rho_i$ on any arc originating in $f_i$.

To state our results we need a last definition regarding the distribution $\theta$ of the starting state. 

\begin{definition}
The \emph{span} of a probability distribution $\theta: \Omega \rightarrow [0,1]$, denoted by $\Span(\theta)$, is the set of flaw indices that may be present in a state selected according to $\theta$, i.e., $ \Span(\theta) = \bigcup_{ \sigma \in \Omega: \theta(\sigma) > 0 } U(\sigma)$.
\end{definition}

\subsection{A Simple Markov Chain}

Our first result concerns the simplest case where in each flawed state $\sigma$, the algorithm addresses the greatest flaw present in $\sigma$, according to an arbitrary but fixed permutation of the flaws. Recall that $\mu$ is the measure on $\Omega$ used to measure progress, $\gamma_i$ is the charge of flaw $f_i$ according to $\mu$, and $\theta$ is the starting state distribution. 
\begin{theorem}\label{asymmetric}
If there exist positive real numbers $\{\psi_i\}_{i \in [m]}$ such that for every $i \in [m]$,
\begin{align}
\zeta_i := 
\frac{\gamma_i}{\psi_i}
 \sum_{ S \subseteq   \Gamma(i) }  \prod_{j \in S} \psi_j   < 1\enspace , \label{eq:mc2}
\end{align}
then for every permutation $\pi$, the walk reaches a sink within $(T_0+s)/\delta$ steps with probability at least $1-2^{-s}$, where $\delta = 1 - \max_{i \in [m]} \zeta_i 
> 0$, and
\[
T_0 	 = 
\log_2 \left( \max_{ \sigma \in \Omega } \frac{\theta(\sigma)  }{ \mu(\sigma) }\right)  + \log_2 \left( \sum_{S \subseteq \Span(\theta) } \prod_{j \in S} \psi_j   \right) 
=  \log_2 \left( \max_{ \sigma \in \Omega } \frac{\theta(\sigma)  }{ \mu(\sigma) }\right)  +  \sum_{j \in \Span(\theta) } \log_2(1+\psi_j)
\enspace .
\]
\end{theorem}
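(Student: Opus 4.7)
The plan is to reduce the theorem to a trajectory-counting argument of Moser--Tardos/Bissacot type, with the measure $\mu$ entering through the charges $\gamma_i$ in place of bad-event probabilities. Three ingredients are needed: a bound on the probability of any specific flaw sequence, a structural description of which sequences can actually arise under the permutation rule, and an LLL-type bound on the weighted sum over such sequences.

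I would start with the per-trajectory bound. The defining inequality $\sum_{\sigma\in f_i}\mu(\sigma)\rho_i(\sigma,\tau)\le\gamma_i\mu(\tau)$, applied right-to-left along a flaw sequence $W=(i_1,\ldots,i_T)$, gives by induction on $T$
\[
\sum_{\sigma_1,\ldots,\sigma_{T+1}}\mu(\sigma_1)\prod_{k=1}^T\rho_{i_k}(\sigma_k,\sigma_{k+1})\,\mathbf{1}[\text{valid walk}]\ \le\ \prod_{k=1}^T\gamma_{i_k},
\]
using $\sum_{\sigma_{T+1}}\mu(\sigma_{T+1})=1$. Writing $\theta(\sigma_1)\le\Theta\mu(\sigma_1)$ with $\Theta:=\max_\sigma\theta(\sigma)/\mu(\sigma)$ then yields $\Pr[\text{walk's first $T$ addressings are }W]\le\Theta\prod_k\gamma_{i_k}$. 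Next, because the $\pi$-greatest flaw is addressed at each step and $U(\sigma_{k+1})\subseteq U(\sigma_k)\cup\Gamma(i_k)$, every admissible $W$ satisfies $i_1\in\Span(\theta)$ and, for each $k$, either $i_{k+1}\in\Gamma(i_k)\cup\{i_k\}$, or $i_{k+1}$ was already present in $\sigma_k$ (forcing $\pi(i_{k+1})<\pi(i_k)$).

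The heart of the proof is the cluster-expansion-style inequality
\[
\sum_{W\text{ admissible},\,|W|=T}\prod_{k=1}^T\gamma_{i_k}\ \le\ \Bigl(\sum_{S\subseteq\Span(\theta)}\prod_{j\in S}\psi_j\Bigr)(1-\delta)^T.
\]
Condition~\eqref{eq:mc2} rewrites as $\gamma_i\prod_{j\in\Gamma(i)}(1+\psi_j)\le(1-\delta)\psi_i$, exactly Bissacot's cluster-expansion form of the LLL hypothesis. I would prove the claim by a witness-forest encoding: to each step $k$ attach a parent $p(k)$ that is either ``root'' (when $i_k$ has been present continuously since $\sigma_1$) or the latest earlier step at which $i_k$ was caused. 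Since $\prod_{j\in\Gamma(i_k)}(1+\psi_j)=\sum_{S\subseteq\Gamma(i_k)}\prod_{j\in S}\psi_j$ is the generating function for the subset of children that step $k$ spawns, while the permutation-monotone continuations are absorbed as new roots seeded by $\Span(\theta)$, telescoping the $\psi$'s along parent--child edges lets each non-root step pay a factor $1-\delta$, leaving the residual $\sum_{S\subseteq\Span(\theta)}\prod_{j\in S}\psi_j$ for the roots.

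Combining the pieces, $P_T\le\Theta\bigl(\sum_{S\subseteq\Span(\theta)}\prod_j\psi_j\bigr)(1-\delta)^T=2^{T_0}(1-\delta)^T$; since $-\log_2(1-\delta)\ge\delta$, taking $T=(T_0+s)/\delta$ pushes this below $2^{-s}$, as claimed. The main obstacle is the forest-counting step: simply dropping the $\pi$-ordering constraint and allowing $i_{k+1}$ to range over $\Gamma(i_k)\cup[m]$ loses an exponential factor. The permutation must be genuinely used to guarantee that each occurrence of a flaw is unambiguously charged to a unique introduction event, so that $\prod_{j\in\Gamma(i_k)}(1+\psi_j)$ enumerates continuations without double-counting; getting this telescoping cleanly aligned with the permutation-induced parent pointers is the principal technical task.
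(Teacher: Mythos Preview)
Your proposal is correct and follows essentially the same route as the paper: the per-trajectory bound you derive is exactly Lemma~\ref{lemma:regeneration}, the forest encoding of admissible witness sequences is what the paper calls \emph{Break Forests} (whose construction it imports from~\cite{AIJACM}, yielding $\mathrm{Roots}(\theta)=2^{\Span(\theta)}$ and $\List(i)=2^{\Gamma(i)}$), and the counting step is carried out in the proof of Theorem~\ref{lem:master} via a Pegden-style branching process rather than the direct telescoping you sketch, but the two presentations are equivalent. The one organizational difference is that the paper factors the argument through the general notion of \emph{traceability} (Definition~\ref{def:traceable}) and Theorem~\ref{lem:master}, so that Theorem~\ref{asymmetric} becomes an immediate corollary once the Break Forest injection is quoted; your technical caveat about aligning the parent pointers with the permutation so that root and child labels are distinct is exactly the content of that injection.
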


Theorem~\ref{asymmetric} has two features worth discussing, shared by all our results.\medskip

\noindent {\bf Arbitrary starting state.} Since $\theta$ can be arbitrary, any foothold on $\Omega$ suffices to apply the theorem. Note also that $T_0$ captures the trade-off between starting at a fixed state vs.\ starting at a state sampled from $\mu$. In the latter case, i.e., when $\theta = \mu$, the first term in $T_0$ vanishes, but the second term grows to reflect the uncertainty of the set of flaws present in $\sigma_1$.\smallskip

\noindent {\bf Arbitrary number of flaws.} The running time depends only on the span $|\Span(\theta)|$, not the total number of flaws $|F|$. This has an implication  analogous to the result of Hauepler, Saha, and Srinivasan~\cite{haeupler_lll} on core events: even when $|F|$ is super-polynomial in the problem's encoding length, it may still be possible to get a polynomial-time algorithm. For example, this can be done by proving that in every state only polynomially many flaws may be present, or by finding a specific state $\sigma_1$ such that $|U(\sigma_1)|$ is small.

\subsection{A Non-Markovian Algorithm}

Our next results concerns the common setting where the neighbors of each flaw in the causality graph span several arcs between them.  We improve Theorem~\ref{asymmetric} in such settings by employing a \emph{recursive} algorithm. That is, an algorithm where the flaw choice at each step depends on the \emph{entire} trajectory up to that point, not just the current state, so that the resulting walk on $\Omega$ is non-Markovian. It is for this reason that we required a non-empty set of actions for every flaw present in a state, and why the definition of the causality digraph does not involve flaw choice. The improvement is that rather than summing over all subsets of $\Gamma(i)$ as in~\eqref{eq:mc2}, we now only sum over \emph{independent} such subsets, where $f_i,f_j$ are dependent if $i \rightarrow j$ \emph{and} $j \rightarrow i$. This improvement is similar to the cluster expansion improvement of Bissacot et al.~\cite{bissacot2011improvement} of the general LLL. As a matter of fact, Theorem~\ref{olala} implies the algorithmic aspects of~\cite{bissacot2011improvement} (see~\cite{PegdenIndepen} and ~\cite{HV}). 

Further, the use of a recursive algorithm makes it possible to  ``shift responsibility" between flaws, so that gains from the aforementioned restriction of the sum can be realized by purposeful flaw ordering. For a permutation $\pi$ of $F$, let $I_{\pi}(S)$ denote the index of the greatest flaw in any $S \subseteq F$ according to $\pi$. For a fixed action digraph $D$ with causality digraph $C$, the recursive algorithm \emph{takes as input} any digraph $R \supseteq C$, i.e., any supergraph of $C$, and is the non-Markovian random walk on $\Omega$ that occurs by invoking procedure {\sc Eliminate}. Observe that if in line~\ref{a_key_diff} we do not intersect $U(\sigma)$ with $\Gamma_R(i)$ the recursion is trivialized, recovering the simple walk of Theorem~\ref{asymmetric}. Its convergence condition, Theorem~\ref{olala}, involves sums over the independent sets of $R$, generalizing the discussion above (as one can always take $R=C$).   
\begin{algorithm}\caption*{{\bf Recursive Walk}}
\begin{algorithmic}[1]\label{Recursive}
\Procedure{Eliminate}{}
\State $\sigma \leftarrow \theta(\cdot)$ \Comment{Sample $\sigma$ from $\theta$}
\While {$U(\sigma) \neq \emptyset$}	
	\State {\sc Address} ($I_{\pi}(U(\sigma)),\sigma$) 
\EndWhile	
\State	\Return $\sigma$ 
\EndProcedure{}{}
\Procedure{Address}{$i,\sigma$}
\State $\sigma \leftarrow$ $\tau \in A(i,\sigma)$ with probability $\rho_i(\sigma,\tau)$ 
\While {$B = U(\sigma) \cap \Gamma_{R}(i) \neq \emptyset$}   \label{a_key_diff}	\Comment{Note $\,\cap \Gamma_{R}(f_i)$} \label{code:critical}
	\State {\sc{Address}}($I_{\pi}(B),\sigma$) 				\label{code:act}
\EndWhile  
\EndProcedure  
\end{algorithmic}
\end{algorithm}

The reason for allowing the addition of arcs in $R$ relative to $C$ is that while adding, say, arcs $i\to j$ and $j \to i$ may make the sums corresponding to $f_i$ and $f_j$ greater, if flaw $f_k$ is such that $\{i, j\} \subseteq \Gamma(k)$, then the sum for flaw $f_k$ may become smaller, since $f_i,f_j$ are now dependent. As a result, without modifying the algorithm, such arc addition can help establish a sufficient condition for rapid convergence to a flawless object, e.g., in our application on Acyclic Edge Coloring in Section~\ref{AECARA}. An analogous phenomenon is also true in the improvement of Bissacot et al.~\cite{bissacot2011improvement}, i.e., denser dependency graphs may yield better analysis.

\begin{definition}\label{defn:G}
For a digraph $R$ on $[m]$, let $G=G(R)=([m],E)$ be the \emph{undirected} graph where $\{i,j\} \in E$ iff both $i \rightarrow j$ and $j \rightarrow i$ exist in $R$. For $S \subseteq F$, let $\Indep(S) = \{S' \subseteq S : \text{$S'$ is an independent set in $G$}\}$.
\end{definition}

\begin{theorem}\label{olala}
Let $R \supseteq C$ be arbitrary. If there exist positive real numbers $\{\psi_i\}$ such that for every $i \in [m]$,
\begin{align}
\zeta_i := \frac{\gamma_i}{\psi_i} \sum_{S \in \Indep(\Gamma_{R}(i))}  \prod_{j \in S} \psi_j < 1 \enspace , \label{eq:mc3}
\end{align}
then for every permutation $\pi$, the recursive walk reaches a sink within $(T_0+s)/\delta$ steps with probability at least $1-2^{-s}$, where $\delta = 1 - \max_{i \in [m]} \zeta_i > 0$, and
\[
T_0 	 = 
\log_2 \left( \max_{ \sigma \in \Omega } \frac{\theta(\sigma)  }{ \mu(\sigma) }\right) + \log_2 \left( \sum_{S \subseteq \Indep \left( \Span(\theta) \right)  } \prod_{j \in S} \psi_j   \right)
 \enspace  .
\]
\end{theorem}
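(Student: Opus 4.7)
My plan is to prove the theorem via a witness-forest argument in the Moser--Tardos lineage, lifted to the distortion-tolerant setting through the charges $\gamma_i$, with the recursive structure of \textsc{Eliminate}/\textsc{Address} supplying the independent-set sum appearing in~\eqref{eq:mc3}.

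\noindent\textbf{Encoding and per-forest weight.} For each execution I would record a rooted labelled forest $F$: one tree per top-level \textsc{Address} call issued by \textsc{Eliminate}, and the children (in order) of a node labelled $i$ are the flaws handled by the recursive sub-calls spawned on line~\ref{code:act} inside that invocation. Since $|F|$ equals the total number of state transitions, it suffices to bound $\Pr[|F|\ge T]$. Definition~\eqref{eq:charges} yields the pointwise bound $\rho_i(\sigma,\tau)\le\gamma_i\,\mu(\tau)/\mu(\sigma)$; multiplying along a trajectory $(\sigma_1,\ldots,\sigma_T)$ telescopes the $\mu$-factors, and after summing over state sequences consistent with a fixed labelled $F$ (using $\sum_\tau\mu(\tau)\le1$) I would obtain
\[
\Pr[\text{witness forest}=F]\;\le\;\Bigl(\max_{\sigma\in\Omega}\tfrac{\theta(\sigma)}{\mu(\sigma)}\Bigr)\prod_{v\in F}\gamma_{\ell(v)}\enspace,
\]
where $\ell(v)$ is the label of $v$.

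\noindent\textbf{Structural constraint and generating function.} The \texttt{while} guard on line~\ref{a_key_diff} forces $U(\sigma)\cap\Gamma_R(j)=\emptyset$ at the moment $\textsc{Address}(j,\cdot)$ returns; consequently, for any two consecutive siblings $j_a,j_{a+1}$ (including any two successive roots of $F$) one has $j_a\not\to j_{a+1}$ in $R$, so $\{j_a,j_{a+1}\}\notin E(G(R))$, while every child of a node labelled $i$ lies in $\Gamma_R(i)$ and the roots lie in $U(\sigma_1)\subseteq\Span(\theta)$. Let $Q_i$ denote the total $\gamma$-weight of admissible trees rooted at $i$. A cluster-expansion-style manipulation, analogous to Bissacot et al.~\cite{bissacot2011improvement} and Pegden~\cite{PegdenIndepen}, would convert the sum over such children sequences into the independent-set polynomial, giving $Q_i\le\gamma_i\sum_{S\in\Indep(\Gamma_R(i))}\prod_{j\in S}Q_j$. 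Guessing $Q_i\le\psi_i$ and using~\eqref{eq:mc3} in the form $\gamma_i\sum_{S\in\Indep(\Gamma_R(i))}\prod_{j\in S}\psi_j\le(1-\delta)\psi_i$, an induction on tree size yields the geometric tail $\sum_{|T|=t}\gamma\text{-weight}(T)\le\psi_i(1-\delta)^t$. Summing over root sets in $\Indep(\Span(\theta))$ and combining with the per-forest bound above gives the claimed $2^{-s}$ tail at time $(T_0+s)/\delta$.

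\noindent\textbf{Main obstacle.} The delicate step will be translating the local invariant ``consecutive siblings are non-adjacent in $G(R)$'' into the global sum over \emph{independent} subsets of $\Gamma_R(i)$ in~\eqref{eq:mc3}: the naive execution tree does not in general have all siblings forming an independent set in $G(R)$, so the cluster-expansion argument must be set up carefully, especially in the non-atomic case (where $D$ may contain multiple arcs $\sigma\xrightarrow{i}\tau$) and with the permutation $\pi$ committed to before execution. A secondary subtlety is the bookkeeping for the starting distribution, which contributes both the $\max\theta/\mu$ factor and the independent-set polynomial over $\Span(\theta)$ in $T_0$, both of which fall out of the root-sum step only if the roots of $F$ are shown to form an independent subset of $U(\sigma_1)$.
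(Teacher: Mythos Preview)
Your overall architecture matches the paper's: bound $\Pr[W_t=W]$ by $\xi\prod_i\gamma_{[i]}$, inject witness sequences into labelled forests, and control the resulting forest sum by a Galton--Watson/generating-function computation driven by the $\psi_i$. The paper packages this as ``the recursive walk is \emph{traceable} with $\mathrm{Roots}(\theta)=\Indep(\Span(\theta))$ and $\List(i)=\Indep(\Gamma_R(i))$'' (imported from~\cite{AIJACM}) and then invokes Theorem~\ref{lem:master}, whose proof is Lemma~\ref{lemma:regeneration} plus a branching-process bound. So you are on the right track, but two points need correction.

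\textbf{The weight bound.} Your pointwise inequality $\rho_i(\sigma,\tau)\le\gamma_i\,\mu(\tau)/\mu(\sigma)$ is valid (each nonnegative summand is at most the sum defining $\gamma_i$), but the step ``telescope along a single trajectory, then sum over state sequences using $\sum_\tau\mu(\tau)\le 1$'' is not: once you pass to the per-trajectory bound $\xi\,\mu(\sigma_{T+1})\prod\gamma$, summing over \emph{all} state sequences compatible with $W$ contributes a factor equal to the number of such sequences, which is not $1$ outside the atomic case. The paper avoids this by never fixing a trajectory: Lemma~\ref{lemma:regeneration} sums over $\sigma_s$ one index at a time, using the \emph{aggregate} bound $\sum_{\sigma\in f_i}\mu(\sigma)\rho_i(\sigma,\tau)\le\gamma_i\mu(\tau)$ at each step, which is exactly what makes the argument go through for arbitrary $D$.

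\textbf{The forest structure.} This is the substantive gap, and you have put your finger on it. The raw execution tree of \textsc{Address} does \emph{not} have its sibling labels forming an independent set of $G(R)$; your observation only yields $j_a\not\to j_{a+1}$ for \emph{consecutive} siblings, which is strictly weaker (non-consecutive siblings can be $G$-adjacent, and siblings can even repeat). There is no generic ``cluster-expansion manipulation'' that upgrades the sequence constraint $j_a\not\to j_{a+1}$ to a sum over $\Indep(\Gamma_R(i))$; the Bissacot et al./Pegden argument already \emph{starts} from forests whose children are independent sets. What is actually needed is an \emph{injection} from $\mathcal{W}_t$ into unordered forests satisfying Definition~\ref{def:traceable} with $\List(i)=\Indep(\Gamma_R(i))$---a different forest from the execution tree, built from the witness sequence using the $\pi$-ordering and the invariant $U(\sigma)\cap\Gamma_R(j)=\emptyset$ at each return. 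The paper does not reprove this injection but cites~\cite{AIJACM} for it; your proof would need to supply that construction (and the companion fact that the roots inject into $\Indep(\Span(\theta))$) before the recursion $Q_i\le\gamma_i\sum_{S\in\Indep(\Gamma_R(i))}\prod_{j\in S}Q_j$ becomes available.
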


\begin{remark}
Theorem~\ref{olala} strictly improves Theorem~\ref{asymmetric} since for $R=C$: (i) the summation in~\eqref{eq:mc3} is only over the subsets of $\Gamma_R(i)$ that are independent in $G$, instead of  all subsets of $\Gamma_R(i)$ as in~\eqref{eq:mc2}, and (ii) similarly for $T_0$, the summation is only over the independent subsets of $\Span(\theta)$, rather than all subsets of $\Span(\theta)$.
\end{remark}

\begin{remark}
Theorem~\ref{olala} can be strengthened by introducing for each flaw $f_i \in F$ a permutation $\pi_i$ of $\Gamma_{R}(i)$ and replacing $\pi$ with $\pi_i$  in line~\ref{code:act} the of Recursive Walk. With this change in~\eqref{eq:mc3} it suffices to sum only over $S \subseteq \Gamma_{R}(i)$ satisfying the following: if the subgraph of $R$ induced by $S$ contains an arc $j \to k$, then $\pi_{i}(j) \ge \pi_i(k)$. As such a subgraph can not contain both $j \to k$ and $k\to j$ we see that $S \in \Indep(\Gamma_R(i))$.
\end{remark} 

\subsection{A General Theorem}

Theorems~\ref{asymmetric} and~\ref{olala} are instantiations of a general theorem we develop for analyzing focused stochastic local search algorithms. Before stating the theorem we briefly discuss its derivation in order to motivate its form. Recall that a focused local search algorithm $\mathcal{A}$ amounts to a flaw choice mechanism driving a random walk on a multidigraph $D$ with transition probabilities $\rho$ and starting state distribution $\theta$. 

To bound the probability that $\mathcal{A}$ runs for $t$ or more steps we partition the set of all $t$-trajectories into equivalence classes, bound the total probability of each class, and sum the bounds for the different classes. Specifically, the partition is according to the $t$-sequence of the first $t$ flaws addressed.
\begin{definition}\label{def:w_t}
For any integer $t\ge 1$, let $\mathcal{W}_t(\mathcal{A})$ denote the set containing all $t$-sequences of flaws that have positive probability of being the first $t$ flaws addressed by $\mathcal{A}$.
\end{definition}

In general, the content of $\mathcal{W}_t(\mathcal{A})$ is an extremely complex function of flaw choice. An essential idea of our analysis is to overapproximate it by \emph{syntactic} considerations capturing the following necessary condition for $W \in \mathcal{W}_t(\mathcal{A})$:  while the very first occurrence of any flaw $f_j$ in $W$ may be attributed to $f_j \ni \sigma_1$, every subsequent occurrence of $f_j$ must be preceded by a distinct occurrence of a flaw $f_i$ that  ``assumes responsibility" for $f_j$, e.g., a flaw $f_i$ that potentially causes $f_j$. Definition~\ref{def:traceable} below establishes a framework for bounding $\mathcal{W}_t(\mathcal{A})$ by relating flaw choice with responsibility by (i) requiring that the flaw choice mechanism is such that the elements of $\mathcal{W}_t(\mathcal{A})$ can be unambiguously represented forests with $t$ vertices, while on the other hand (ii) generalizing the subsets of flaws for which a flaw $f_i$ may be responsible from subsets of $\Gamma(i)$ to arbitrary subsets of flaws, thus enabling responsibility shifting.

\begin{definition}\label{def:traceable}
We will say that algorithm $\mathcal{A}$ is \emph{traceable} if there exist sets $\mathrm{Roots}(\theta) \subseteq 2^{[m]}$ and $\mathrm{List}(1) \subseteq 2^{[m]}, \ldots, \mathrm{List}(m) \subseteq 2^{[m]}$ such that for every $t \ge 1$, the flaw sequences in $\mathcal{W}_t(\mathcal{A})$ can be injected into unordered rooted forests with $t$ vertices that have  the following properties:
\begin{enumerate}
\item 
Each vertex of the forest is labeled by an integer $i \in [m]$.\label{cond:label}
\item 
The labels of the roots of the forest are distinct and form an element of $\mathrm{Roots}(\theta)$.\label{cond:roots}
\item 
The indices labeling the children of each vertex are distinct.\label{cond:distinct}
\item
If a vertex is labelled by $i \in [m]$, then the labels of its children form an element of $\mathrm{List}(i)$.
\label{cond:list}
\end{enumerate}
\end{definition}

In~\cite{AIJACM} it was shown that both the simple random walk algorithm in Theorem~\ref{asymmetric} and the recursive walk algorithm in Theorem~\ref{olala} are traceable. Specifically,  the set $\mathcal{W}_t$ of the former can be injected into so-called \emph{Break Forests}, so that Definition~\ref{def:traceable} is satisfied, with $\mathrm{Roots}(\theta) = 2^{\Span(\theta)}$ and $\mathrm{List}(i) = 2^{\Gamma_R(i)}$. For the latter, $\mathcal{W}_t$ can be analogously injected into so-called \emph{Recursive Forests} with $\mathrm{Roots}(\theta) = \Indep(\Span(\theta))$ and $\mathrm{List}(i) = \Indep(\Gamma_R(i))$. Thus, Theorems~\ref{asymmetric},\ref{olala} follow readily from Theorem~\ref{lem:master} below. 
\begin{theorem}[Main result]\label{lem:master}
If algorithm $\mathcal{A}$ is  traceable and there exist positive real numbers $\{\param_i\}_{i \in [m]}$ such that for every  $i \in [m]$,
\begin{align}
\zeta_i := \frac{\gamma_i}{\param_i } \sum_{S \in \List(i)}  \prod_{j \in S} \param_j < 1 \enspace \label{genikoteri} , 
\end{align}
then $\mathcal{A}$ reaches a sink within $(T_0+s)/\delta$ steps with probability at least $1-2^{-s}$, where $\displaystyle{\delta = 1 - \max_{i \in [m]} \zeta_i}$ and
\[
 T_0 		 = \log_2 \left( \max_{ \sigma \in \Omega } \frac{\theta(\sigma)  }{ \mu(\sigma) }\right) + \log_2 \left( \sum_{S \in \mathrm{Roots}(\theta) } \prod_{j \in S} \psi_j   \right)\enspace .
\]
\end{theorem}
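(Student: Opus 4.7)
The plan is to apply the classical witness-forest argument within the traceable framework, with the charges $\gamma_i$ playing the role of the progress measure. Let $p_t$ denote the probability that $\mathcal{A}$ takes at least $t$ steps. Conditioning on which sequence $W = (w_1,\ldots,w_t) \in \mathcal{W}_t(\mathcal{A})$ of flaws is addressed,
\[
p_t \;\le\; \sum_{W \in \mathcal{W}_t(\mathcal{A})} \Pr[W],
\]
where $\Pr[W]$ is the probability that the first $t$ flaws addressed by $\mathcal{A}$ are precisely $W$. By traceability, $\mathcal{W}_t$ injects into the set of valid forests on $t$ vertices whose root-label set lies in $\mathrm{Roots}(\theta)$ and whose children-label sets lie in $\List(\cdot)$, so after this injection $p_t$ is dominated by a weighted forest sum. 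The proof then splits into (i) bounding $\Pr[W]$ by a product of charges, (ii) bounding the resulting forest sum via a generating-function argument, and (iii) converting this into the claimed exponential tail.

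For step (i), expand $\Pr[W] \le \sum_{\sigma_1,\ldots,\sigma_{t+1}} \theta(\sigma_1) \prod_{i=1}^t \rho_{w_i}(\sigma_i,\sigma_{i+1})$, noting that $\rho_{w_i}(\sigma_i,\cdot)$ is supported on $\sigma_i\in f_{w_i}$. Bounding $\theta(\sigma_1)\le c\,\mu(\sigma_1)$ with $c=\max_{\sigma}\theta(\sigma)/\mu(\sigma)$ and peeling off the $\sigma$-sums left-to-right via the defining inequality $\sum_{\sigma\in f_i}\mu(\sigma)\rho_i(\sigma,\tau)\le \gamma_i\mu(\tau)$ yields $\Pr[W]\le c\prod_{i=1}^t \gamma_{w_i}$. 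For step (ii), introduce the formal power series $T_i(x)=\sum_{s\ge 1} x^s\sum_T \prod_{v\in T}\gamma_{\ell(v)}$, with the inner sum over trees of size $s$ rooted at a vertex labelled $i$ in which each vertex labelled $j$ has children forming an element of $\List(j)$. Root decomposition yields the fixed-point relation $T_i(x)=\gamma_i x\sum_{S\in\List(i)}\prod_{j\in S}T_j(x)$. Using the hypothesis \eqref{genikoteri} in the form $\gamma_i\sum_{S\in\List(i)}\prod_{j\in S}\psi_j=\zeta_i\psi_i\le(1-\delta)\psi_i$, an induction on the truncation depth shows $T_i(x)\le \psi_i$ for every $x\in[0,1/(1-\delta)]$. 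Since the weighted forest sum factors as $\sum_{R\in\mathrm{Roots}(\theta)}\prod_{i\in R}T_i(x)$, extracting the coefficient of $x^t$ at $x=1/(1-\delta)$ gives
\[
\sum_{F:\,|F|=t}\prod_{v\in F}\gamma_{\ell(v)} \;\le\; (1-\delta)^t \sum_{R\in\mathrm{Roots}(\theta)} \prod_{i\in R}\psi_i.
\]

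Combining (i) and (ii), and noting that $2^{T_0}=c\cdot\sum_{R\in\mathrm{Roots}(\theta)}\prod_{i\in R}\psi_i$, gives $p_t\le (1-\delta)^t\cdot 2^{T_0}$. Since $\log_2(1/(1-\delta))\ge \delta$, this yields $p_t\le 2^{T_0-\delta t}$, which is at most $2^{-s}$ whenever $t\ge (T_0+s)/\delta$, as claimed. The main technical obstacle is the inductive bound $T_i(x)\le \psi_i$ in step (ii): the $T_i$ must be truncated to finite depth so that the partial sums are bona fide numbers, the inductive step has to close cleanly using \eqref{genikoteri} together with the fact that dropping the joint constraint $\sum s_j\le N$ in favour of the product of individually truncated $T_j^{(\le N)}$ only loses an upper bound, and the root decomposition of the forest generating function must be arranged so that the $\mathrm{Roots}(\theta)$ factor in $T_0$ emerges with exactly the right exponent when extracting the $x^t$-coefficient.
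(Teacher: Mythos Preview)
Your proof is correct. Step~(i) is exactly the paper's Lemma~\ref{lemma:regeneration}, so the two arguments coincide there. The genuine difference is in step~(ii): the paper bounds the forest sum via a \emph{branching process} (following Pegden~\cite{PegdenIndepen}), constructing a Galton--Watson--type process whose offspring distribution is built from the parameters $x_i=\psi_i/(1+\psi_i)$, computing the exact probability $p_\phi$ of each forest $\phi$, rewriting $\prod_v \gamma_{\ell(v)}$ as $p_\phi\cdot(\sum_{S\in\mathrm{Roots}}\prod_j\psi_j)\cdot\prod_v\zeta_{\ell(v)}$, and then using $\sum_\phi p_\phi\le 1$. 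You instead run a direct \emph{generating-function / fixed-point} argument: set up the tree recursion $T_i(x)=\gamma_i x\sum_{S\in\List(i)}\prod_{j\in S}T_j(x)$, verify inductively (on truncation depth) that $T_i(x)\le\psi_i$ on $[0,1/(1-\delta)]$, and extract the $x^t$-coefficient by evaluating at the boundary point. Both routes land on the same inequality $\sum_{|\phi|=t}\prod_v\gamma_{\ell(v)}\le(1-\delta)^t\sum_{S\in\mathrm{Roots}}\prod_j\psi_j$ and hence the same $T_0$ and $\delta$. The branching-process route is more probabilistic in flavor and makes the role of $\psi_i$ as ``birth odds'' transparent; your analytic route is shorter, avoids introducing an auxiliary random process, and makes the dependence on the slack $1-\delta$ visible directly through the radius of convergence.
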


Theorem~\ref{lem:master} also implies the  ``LeftHanded Random Walk" result of~\cite{AIJACM} and extends it to non-uniform transition probabilities, since that algorithm is also traceable. Notably, in the LeftHanded LLL introduced by Pedgen~\cite{PegdenLLLL} and which inspired the algorithm, the flaw order $\pi$ can be chosen in a \emph{provably} beneficial way, unlike in the algorithms of Theorems~\ref{asymmetric} and~\ref{olala}, which are indifferent to $\pi$. Establishing this goodness, though, entails attributing responsibility very differently from what is suggested by the causality digraph, making full use of the power afforded by traceability and Theorem~\ref{lem:master}.

\subsection{Resampling Oracles via Atomic Actions}

To get a constructive result by LLL algorithmization via resampling oracles, i.e., given $\Omega, F$, and $\mu$, we must design $(D,\rho)$ that regenerate $\mu$ at every flaw $f_i \in F$. 
This can be a daunting task in general. 
We simplify this task greatly for atomic action digraphs. Such digraphs capture algorithms that appear in several settings, e.g., the Moser-Tardos algorithm when flaws correspond to partial assignments, the algorithm of Harris and Srinivasan for permutations~\cite{SrinivasanPerm}, and others (see~\cite{AIJACM}). While atomicity may seem an artificial condition, it is actually a natural way to promote search space exploration, as it is equivalent to the following: distinct states $\sigma, \sigma' \in f_i$ must have disjoint actions, i.e., $A(i,\sigma) \cap A(i,\sigma') = \emptyset$. In most settings atomicity can be achieved in a straightforward manner. For example, in the variable setting 
%
%
atomicity is implied by an idea that is extremely successful in practice, namely ``focus"~\cite{papafocus, walksat, circumspect}: every state transformation should be the result of selecting a flaw present in the current state and modifying only the variables of that flaw.

Theorem~\ref{atomic_oracles} asserts that when the action digraph $D$ must be atomic, then in order to regenerate $\mu$ at $f_i$ it is sufficient (and necessary) for the states in each set $A(i,\sigma)$ to have total probability given by~\eqref{synthiki_atomic}. Equation~\eqref{eq:harmo_def} then automatically provides appropriate transition probabilities. 
Combined, equations~\eqref{synthiki_atomic}, \eqref{eq:harmo_def} offer strong guidance in designing 
resampling oracles in atomic digraphs.

\begin{theorem} \label{atomic_oracles}
If $D$ is atomic and $(D,\rho)$ regenerate $\mu$ at $f_i$, then for every $\sigma \in f_i$:
\begin{eqnarray}
\sum_{\tau \in A(i,\sigma) } \mu(\tau) 
	& = & \frac{\mu(\sigma)}{\mu(f_i)}\label{synthiki_atomic} \\
\rho_i(\sigma,\tau) 
	& = & \frac{\mu(\tau)}{ \sum_{\sigma' \in A(i,\sigma) } \mu(\sigma')} \quad\text{for every $\tau \in A(i,\sigma)$} \label{eq:harmo_def} \enspace .
\end{eqnarray}
\end{theorem}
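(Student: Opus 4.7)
The plan is to exploit atomicity to collapse the sum in the regeneration identity~\eqref{eq:oracle} to a single nonzero term per target state, after which both claims drop out by simple bookkeeping.

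First I would fix $i \in [m]$ and an arbitrary $\tau \in \Omega$ and analyze~\eqref{eq:oracle} pointwise in $\tau$. Since $\rho_i(\sigma,\tau)>0$ only when $\sigma \xrightarrow{i} \tau$ is an arc of $D$, atomicity guarantees that at most one summand on the left-hand side of~\eqref{eq:oracle} is nonzero. Combined with the standing assumption $\mu(\tau)>0$ and the regeneration identity, this forces exactly one $\sigma \in f_i$ with $\tau \in A(i,\sigma)$, and the following pointwise identity, which will be the workhorse of the argument:
\[
\mu(\sigma)\,\rho_i(\sigma,\tau) \;=\; \mu(f_i)\,\mu(\tau) \qquad \text{whenever } \tau \in A(i,\sigma). \quad (\star)
\]

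Next I would derive~\eqref{synthiki_atomic}. Reverse the roles of $\sigma$ and $\tau$: fix $\sigma \in f_i$ and sum $(\star)$ over $\tau \in A(i,\sigma)$. Since $\rho_i(\sigma,\cdot)$ is a probability distribution supported on $A(i,\sigma)$, the summation of the left-hand side collapses to $\mu(\sigma)$, giving $\mu(\sigma) = \mu(f_i)\sum_{\tau \in A(i,\sigma)}\mu(\tau)$, which is~\eqref{synthiki_atomic} after dividing by $\mu(f_i)$ (which is positive since $\mu>0$ on $\Omega$ and $f_i\neq\emptyset$ whenever we are bothering to regenerate at it). Finally, to obtain~\eqref{eq:harmo_def}, I would eliminate $\mu(f_i)$ between $(\star)$ and~\eqref{synthiki_atomic}: substituting $\mu(f_i) = \mu(\sigma)\big/\sum_{\sigma' \in A(i,\sigma)}\mu(\sigma')$ into $(\star)$ and cancelling the common factor of $\mu(\sigma)$ yields exactly the stated formula for $\rho_i(\sigma,\tau)$.

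There is no real obstacle here once one sees that atomicity, together with positivity of $\mu$, upgrades the averaged identity~\eqref{eq:oracle} into the pointwise identity $(\star)$. The only point requiring mild care is the quantifier switch: atomicity is naturally a statement indexed by the target $\tau$, whereas~\eqref{synthiki_atomic}--\eqref{eq:harmo_def} are indexed by the source $\sigma$, so the argument must proceed "target-first" to extract $(\star)$ and then sum over targets in $A(i,\sigma)$ to express everything in terms of $\sigma$.
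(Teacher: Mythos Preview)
Your argument is correct and follows essentially the same route as the paper: both use atomicity together with $\mu>0$ to turn the regeneration identity into the pointwise relation $\rho_i(\sigma,\tau)=\mu(\tau)\,\mu(f_i)/\mu(\sigma)$ (your $(\star)$, the paper's displayed equation), and then sum over $\tau\in A(i,\sigma)$ to obtain~\eqref{synthiki_atomic} and read off~\eqref{eq:harmo_def}. The only cosmetic difference is the order in which the two conclusions are drawn.
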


\subsection{A Sharp Analysis and the Role of Flaw Choice}\label{sec:fc}

Let $W_t$ be the random variable that equals the  sequence of the first $t$ flaws addressed by the algorithm, or $\bot$ if the algorithm reaches a flawless object in fewer than $t$ steps. 
Recall that $\mathcal{W}_t(\mathcal{A})$ denotes the set of all $t$-sequences of flaws that have positive probability of being the first $t$ flaws addressed by an algorithm $\mathcal{A}$, i.e., the range of $W_t$ except $\bot$. Trivially, the probability that $\mathcal{A}$ takes at least $t$ steps equals 
\[
\sum_{ W \in \mathcal{W}_t( \mathcal{A}) } \Pr[W_t = W]  \enspace .
\] 

\begin{theorem}\label{tight}
For any algorithm $\mathcal{A}$ for which $D$ is atomic and $(D,\rho)$ regenerate $\mu$ at every flaw, for every flaw sequence $W = w_1, w_2, \ldots, w_t \in \mathcal{W}_t(\mathcal{A})$, 
\begin{equation}
\Pr[W_t = W] \in [\alpha, \beta] \cdot \prod_{i = 1}^{t} \mu(w_i)
\enspace ,  \label{panw_katw}
\end{equation}
where $\alpha = \min_{\sigma \in \Omega} \theta(\sigma)/\mu(\sigma)$ and $\beta = \max_{\sigma \in \Omega} \theta(\sigma)/\mu(\sigma)$. 
\end{theorem}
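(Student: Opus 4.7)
The plan is to unroll $\Pr[W_t = W]$ as an explicit sum over state trajectories, substitute the closed form for $\rho$ given by Theorem~\ref{atomic_oracles}, telescope the transition product, and then reparametrize by the terminal state using atomicity. Writing $q_j(w_j \mid \sigma_1,\ldots,\sigma_j)$ for the algorithm's (history-dependent, possibly randomized) flaw-choice probability at step $j$, I start from
\[
\Pr[W_t = W] = \sum_{\sigma_1,\ldots,\sigma_{t+1}} \theta(\sigma_1) \prod_{j=1}^{t} q_j(w_j \mid \sigma_1,\ldots,\sigma_j)\, \rho_{w_j}(\sigma_j,\sigma_{j+1}),
\]
where only trajectories with $\sigma_{j+1}\in A(w_j,\sigma_j)$ contribute. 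Theorem~\ref{atomic_oracles} supplies the identity $\rho_i(\sigma,\tau) = \mu(\tau)\mu(f_i)/\mu(\sigma)$ on every arc of $D$, so the transition product along a compatible trajectory telescopes to $\frac{\mu(\sigma_{t+1})}{\mu(\sigma_1)}\prod_{j=1}^t \mu(w_j)$, letting us factor $\prod_j \mu(w_j)$ cleanly out of the entire sum.

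The next key step leverages atomicity together with regeneration. Atomicity gives, for each pair $(i,\tau)$, at most one $\sigma \in f_i$ with $\sigma \xrightarrow{i} \tau$; regeneration forces this pre-image to exist, because otherwise the l.h.s.\ of~(\ref{eq:oracle}) would vanish while $\mu(\tau) > 0$. So, for any $\sigma_{t+1}\in\Omega$ and any $W$, there is a unique backward trajectory $(\sigma_1^{\ast},\ldots,\sigma_t^{\ast})$ terminating at $\sigma_{t+1}$ that automatically satisfies $\sigma_j^{\ast}\in f_{w_j}$ and $\sigma_{j+1}^\ast \in A(w_j,\sigma_j^\ast)$. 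Reparametrizing the sum by $\sigma_{t+1}$ alone yields
\[
\Pr[W_t = W] = \left(\prod_{j=1}^t \mu(w_j)\right) \sum_{\sigma_{t+1}\in\Omega} \frac{\theta(\sigma_1^{\ast})}{\mu(\sigma_1^{\ast})}\, \mu(\sigma_{t+1}) \prod_{j=1}^t q_j(w_j \mid \sigma_1^{\ast},\ldots,\sigma_j^{\ast}).
\]

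Because $\theta(\sigma)/\mu(\sigma)\in[\alpha,\beta]$ for every $\sigma\in\Omega$, pulling that ratio out as a uniform bound reduces the statement~(\ref{panw_katw}) to showing that the residual quantity $S_W := \sum_{\sigma_{t+1}}\mu(\sigma_{t+1})\prod_{j} q_j(w_j\mid \sigma_1^{\ast},\ldots,\sigma_j^{\ast})$ equals exactly $1$ for every $W\in\mathcal{W}_t(\mathcal{A})$. This identity is the step I expect to be the main obstacle. My route would be to note that the same chain of identities, applied with $\theta=\mu$, yields $\Pr_\mu[W_t = W] = S_W \cdot \prod_j \mu(w_j)$, thereby reducing the problem to proving $\Pr_\mu[W_t = W] = \prod_j \mu(w_j)$ directly. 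The natural argument is an induction on $t$: regeneration guarantees that after one resampling step initiated from $\sigma_1\sim \mu$ conditioned on $\sigma_1 \in f_{w_1}$, the resulting state is again $\mu$-distributed; combining this invariance with a careful bookkeeping of the flaw-choice probabilities summed against $\mu$ collapses the remaining contributions and produces the product $\prod_j \mu(w_j)$ exactly, completing the proof.
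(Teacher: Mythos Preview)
Your derivation of the identity
\[
\Pr[W_t = W] \;=\; \Bigl(\prod_{j=1}^t \mu(w_j)\Bigr) \sum_{\sigma_{t+1}\in\Omega} \frac{\theta(\sigma_1^{\ast})}{\mu(\sigma_1^{\ast})}\, \mu(\sigma_{t+1}) \prod_{j=1}^t q_j(w_j \mid \sigma_1^{\ast},\ldots,\sigma_j^{\ast})
\]
is correct, and since each $q_j \le 1$ and $\sum_{\sigma_{t+1}} \mu(\sigma_{t+1}) = 1$, the upper bound in~\eqref{panw_katw} follows at once by bounding $\theta/\mu \le \beta$; this matches the paper's route via Lemma~\ref{lemma:regeneration}, specialized to $\gamma_i = \mu(f_i)$. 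The lower bound, however, genuinely requires $S_W \ge 1$, and this is where your proposal breaks down.

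The induction you sketch for $\Pr_\mu[W_t=W]=\prod_j\mu(w_j)$ conflates the event ``$\sigma_1 \in f_{w_1}$'' with the event ``the algorithm selects $w_1$ at step $1$''. With deterministic flaw choice the latter is in general a \emph{strict} subset of $f_{w_1}$ (whenever another flaw present at $\sigma_1$ has higher priority), so conditioning $\mu$ on it does not yield $\mu(\cdot \mid f_{w_1})$, and regeneration no longer forces $\sigma_2\sim\mu$. A two-bit example makes this concrete: $\Omega=\{0,1\}^2$, $\mu$ uniform, $f_i=\{x_i=0\}$ for $i\in\{1,2\}$, Moser--Tardos resampling (atomic and regenerating), $\theta=\mu$, and the algorithm addresses $f_1$ whenever it is present. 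Then $\alpha=\beta=1$ and $(f_2)\in\mathcal W_1$, yet $\Pr[W_1=(f_2)]=\Pr[\sigma_1=10]=1/4$ while $\alpha\cdot\mu(f_2)=1/2$; equivalently $S_{(f_2)}=1/2\ne 1$. So $S_W=1$ is false in general. The paper's own argument for the lower bound makes the same implicit leap: it writes $\Pr[\Sigma^\tau]=\theta(\sigma_1^\tau)\prod_j\rho_{w_j}(\sigma_j^\tau,\sigma_{j+1}^\tau)$ with no flaw-choice indicator and sums over all $\tau\in\Omega$, but that sum is an \emph{upper} bound on $\Pr[W_t=W]$, not a lower one. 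In the generality stated, the lower half of~\eqref{panw_katw} does not hold, so the obstacle you correctly isolated cannot be overcome; your more explicit bookkeeping with the $q_j$'s is in fact what exposes the issue that the paper's proof glosses over.
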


Theorem~\ref{tight} tell us that every algorithm where $(D,\rho)$ form atomic resampling oracles, will converge to a flawless object \emph{if and only if} the sum 
\[
\sum_{ W \in \mathcal{W}_t( \mathcal{A}) } \prod_{i = 1}^{t} \mu(w_i)
\] 
converges to zero as $t$ grows. In other words, the quality of the algorithm  depends \emph{solely} on the  set $\mathcal{W}_t(\mathcal{A})$ which, in turn, is determined by flaw choice (and the initial distribution $\theta$). 

In the work of Moser and Tardos for the variable setting~\cite{MT} and of Harris and Srinivasan for the uniform mesure on permutations~\cite{SrinivasanPerm}, flaw choice can be arbitrary and the whole issue ``is swept under the rug''~\cite{mario_survey}. This can be explained as follows. In these settings, due to the symmetry of $\Omega$, we can afford to overapproximate $\mathcal{W}_t(\mathcal{A})$ in a way that completely ignores flaw choice, i.e., we can deem flaw choice to be adversarial, and still recover the LLL condition. Theorem~\ref{tight} shows that this should not be confused with deeming flaw choice ``irrelevant" for such algorithms. Exactly the opposite is true, a fact also established experimentally~\cite{MTexper}: the Moser-Tardos algorithm, in practice, succeeds on instances far denser  than predicted by the LLL condition.

Kolmogorov~\cite{Kolmofocs} gave a sufficient condition, called \emph{commutativity}, for arbitrary flaw choice. One can think of commutativity as the requirement that there exists a supergraph of the causality graph satisfying a strong symmetry condition (including that all arcs are bidirectional), for which the LLL condition still holds. However, such symmetries can not be expected to hold in general, something reflected in the requirement of traceability in our Theorem~\ref{lem:master}, and in the specificity of the flaw choice mechanisms in our Theorems~\ref{asymmetric} and~\ref{olala}.
More generally, in~\cite{HV}, Harvey and Vondr\'{a}k provided strong evidence that in the absence of commutativity, specific flaw choice is necessary to match Shearer's criterion for the LLL.

\subsection{Comparison with Resampling Oracles}

Harvey and Vondr\'{a}k~\cite{HV} proved that in the setting of resampling oracles, i.e., no distortion, when the causality graph is symmetric, if one resamples a maximal independent set of bad events each time, the resulting algorithm succeeds even under Shearer's condition. (Notably, Shearer's condition, involving an exponential number of terms, is not used in applications). As a corollary, they prove that in this setting, in~\eqref{eq:mc2}, strict inequality ($<$) can be replaced with inequality ($\le$).   As our results are over arbitrary directed causality graphs, for which no analogue to Shearer's condition exists, we do not have an analogous result. However, for the case where the causality graph is symmetric (undirected), the third author showed~\cite{Kolmofocs}  that the analogue of Shearer's lemma holds in our framework. That is, if the conditions that result when in the standard Shearer lemma one replaces probabilities with charges are satisfied, then the algorithm that resamples a maximal independent set of bad events each time succeeds.


\section{Bounding the Probabilities of Trajectories}

To bound the probability that an algorithm $\mathcal{A}$ runs for $t$ or more steps we partition  its $t$-trajectories into equivalence classes, bound the total probability of each class, and sum the bounds for the different classes.  Formally, for a trajectory $\Sigma = \sigma_1 \xrightarrow{w_1}   \sigma_2 \xrightarrow{w_2}  \cdots$ we let $W(\Sigma) = w_1,w_2,\ldots$ denote its \emph{witness} sequence, i.e., the sequence of flaws addressed along $\Sigma$ (note that $\Sigma$ determines $W(\Sigma)$ as flaw choice is deterministic). We let $W_t(\Sigma) = \perp$ if $\Sigma$ has fewer than $t$ steps, otherwise we let $W_t(\Sigma)$ be the $t$-prefix of $W(\Sigma)$. Slightly abusing notation, as mentioned, we let $W_t = W_t(\Sigma)$ be the random variable when $\Sigma$ is the trajectory of the walk, i.e., selected according to $(D,\rho,\theta)$ and the flaw choice mechanism. Finally, recall that $\mathcal{W}_{t} = \mathcal{W}_{t}(\mathcal{A}) $ denotes the range of $W_t$ for algorithm $\mathcal{A}$ except for $\perp$, i.e., $\mathcal{W}_{t}(\mathcal{A})$ is the set of $t$-sequences of flaws that have positive probability of being the first $t$ flaws addressed by $\mathcal{A}$, as per Definition~\ref{def:w_t}. Thus,
\[
\Pr[\text{Algorithm $\mathcal{A}$ takes $t$ or more steps}] = 
\sum_{W \in \mathcal{W}_{t}(\mathcal{A})} \Pr[W_t = W] \enspace .
\]

Key to our analysis will be the derivation of an upper bound for $\Pr[W_t = W]$ that holds for \emph{arbitrary} $t$-sequences of flaws, i.e., not necessarily elements of $\mathcal{W}_{t}(\mathcal{A})$, and which factorizes over the flaws in $W$. For an arbitrary sequence of flaws $A = a_1,\ldots,a_t$, let us denote by $[i]$ the index $j \in [m]$ such that $a_i = f_j$. 
\begin{lemma}\label{lemma:regeneration}
Let $\xi = \xi(\theta,\mu) = \max_{\sigma \in \Omega} \{\theta(\sigma)/\mu(\sigma)\}$. For every sequence of flaws $W = w_1, \ldots, w_t$,
\[
\Pr[W_t = W] \le \xi \prod_{i = 1}^{t} \gamma_{[i]} \enspace .
\]
\end{lemma}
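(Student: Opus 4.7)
The plan is to expand $\Pr[W_t = W]$ as a sum over all $t$-trajectories $\sigma_1 \xrightarrow{w_1} \sigma_2 \xrightarrow{w_2} \cdots \xrightarrow{w_t} \sigma_{t+1}$ with witness $W$ and then contract that sum by iterated application of the charge inequality. Concretely, the probability of a specific trajectory is $\theta(\sigma_1)$ times the product, over $i = 1, \ldots, t$, of the probability that the flaw-choice mechanism picks $w_i$ given the history (a number in $[0,1]$) multiplied by $\rho_{[i]}(\sigma_i, \sigma_{i+1})$. Dropping the flaw-choice factors and extending $\rho_i(\sigma,\cdot) \equiv 0$ when $\sigma \notin f_i$ (so that the implicit constraint $\sigma_i \in f_{[i]}$ becomes automatic), one obtains the upper bound
\[
\Pr[W_t = W] \;\le\; \sum_{\sigma_1, \ldots, \sigma_{t+1}} \theta(\sigma_1) \prod_{i=1}^{t} \rho_{[i]}(\sigma_i, \sigma_{i+1}) \;\le\; \xi \sum_{\sigma_1, \ldots, \sigma_{t+1}} \mu(\sigma_1) \prod_{i=1}^{t} \rho_{[i]}(\sigma_i, \sigma_{i+1}),
\]
where the second inequality uses $\theta(\sigma) \le \xi \mu(\sigma)$ pointwise.

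Next I would collapse the remaining sum by induction. Define $p_1(\sigma) = \mu(\sigma)$ and, for $i \ge 1$, $p_{i+1}(\tau) = \sum_{\sigma} p_i(\sigma)\, \rho_{[i]}(\sigma, \tau)$, so that $\sum_{\tau} p_{t+1}(\tau)$ is precisely the sum on the right above. The key step is the claim that $p_{i+1}(\tau) \le \left( \prod_{j=1}^{i} \gamma_{[j]} \right) \mu(\tau)$ for every $\tau$ and every $i \ge 0$. The base case is the definition of $p_1$. For the inductive step, assume $p_i(\sigma) \le c_i\, \mu(\sigma)$ pointwise with $c_i = \prod_{j < i} \gamma_{[j]}$; then by the charge definition, which is exactly the pointwise bound $\sum_{\sigma} \mu(\sigma)\, \rho_{[i]}(\sigma, \tau) \le \gamma_{[i]}\, \mu(\tau)$, we get $p_{i+1}(\tau) \le c_i \gamma_{[i]} \mu(\tau)$. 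Summing the final bound over $\tau$ and using $\sum_\tau \mu(\tau) = 1$ yields $\sum_\tau p_{t+1}(\tau) \le \prod_{i=1}^{t} \gamma_{[i]}$, which combined with the first display gives the lemma.

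The only subtle point, and therefore the main obstacle to get right, is the justification for dropping the flaw-choice indicator factors in the first step. This is legitimate because (i) those factors lie in $[0,1]$ so dropping them can only increase the sum, and (ii) the convention $\rho_i(\sigma,\cdot) = 0$ for $\sigma \notin f_i$ ensures that the remaining sum still enforces $\sigma_i \in f_{[i]}$, which is a \emph{necessary} condition for $w_i$ to be addressed at step $i$. In particular, if $W \notin \mathcal{W}_t(\mathcal{A})$ so that some flaw-choice indicator is identically zero, the left-hand side is $0$ and the bound holds trivially; otherwise, the chain of inequalities above applies verbatim. Nowhere in the argument do we need the measure $\mu$ to be related to $\mathcal{A}$, which is precisely why the charges $\gamma_i$ end up being the right yardstick against an arbitrary background measure.
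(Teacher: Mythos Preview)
Your proof is correct and follows essentially the same approach as the paper: both arguments drop the flaw-choice indicator (using that $\sigma_i\in f_{[i]}$ is merely necessary), replace $\theta$ by $\xi\mu$, and then iterate the charge inequality $\sum_{\sigma\in f_i}\mu(\sigma)\rho_i(\sigma,\tau)\le\gamma_i\,\mu(\tau)$ by induction. The only cosmetic difference is bookkeeping: the paper inducts directly on the joint quantity $\Pr[W_s=A\cap\sigma_{s+1}=\tau]\le\xi\bigl(\prod_{j\le s}\gamma_{[j]}\bigr)\mu(\tau)$, whereas you first peel off the $\xi$ factor and then induct on the propagated measure $p_i$, but the content is identical.
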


\begin{proof}
We claim that for every $t \geq 0$, every $t$-sequence of flaws $W$, and every state $\tau \in \Omega$,
\begin{equation}\label{eq:goal}
\Pr[W_t = W \cap \sigma_{t+1} = \tau] 
\le \xi   \cdot \prod_{i = 1}^{t} \gamma_{[i]} \cdot \mu(\tau) \enspace .
\end{equation}
Summing~\eqref{eq:goal} over all $\tau \in \Omega$ proves the lemma.

To prove our claim we proceed by induction on $|W|$ after recalling that for every $i \in [m]$ and $\tau \in \Omega$, by the definition of $\{\gamma_i\}_{i \in [m]}$, 
\begin{equation}\label{eq:aro}
\sum_{\sigma \in f_i} \mu(\sigma) \rho_i(\sigma,\tau) \le \gamma_i \cdot \mu(\tau) \enspace . 
\end{equation}

For $|W|=0$ the claim holds because $\Pr[\sigma_1 = \tau] = \theta(\tau) \le \xi \mu(\tau)$ for all $\tau \in \Omega$, by the definition of $\xi$.

Assume that~\eqref{eq:goal} holds for all $s$-sequences of flaws, for some $s \ge 0$. Let $A' = A, f_i$ be any sequence of $s+1$ flaws and let $\tau \in \Omega$ be arbitrary. The first inequality below is due to the fact that since $f_i$ is the last flaw in $A'$ a necessary (but not sufficient) condition for  the event $W_{s+1} = A' $  to occur is that $f_i$ is present in the state that results after the flaws in $A$ have been addressed (it is not sufficient as $\mathcal{A}$ may choose to address a flaw other than $f_i$). The second inequality  follows from the inductive hypothesis, while the third from~\eqref{eq:aro}. 
\begin{eqnarray*}
\Pr[W_{s+1} = A' \cap \sigma_{s+2} = \tau] 
& \le &  
\sum_{\sigma \in f_i}  \rho_i(\sigma,\tau)  \Pr[W_s = A \cap \sigma_{s+1} = \sigma] \\
& \le &  
\xi \cdot \prod_{i = 1}^{s} \gamma_{[i]}  \cdot \sum_{\sigma \in f_i}  \mu(\sigma) \cdot \rho_i (\sigma, \tau) \\
& \le &  
\xi  \cdot \prod_{i = 1}^{s+1} \gamma_{[i]} \cdot  \mu(\tau) \enspace .
\end{eqnarray*}
\end{proof}

\section{Proof of Theorems~\ref{atomic_oracles} and~\ref{tight}}\label{sec:misc_proofs}

We first identify for every digraph--measure pair $(D, \mu)$ certain transition probabilities $\rho$ as special.
\begin{harmonic}
$(D,\rho,\mu)$ are \emph{harmonic} if for every $i \in [m]$ and every transition $(\sigma,\tau ) \in   f_i \times A(i,\sigma)$,
\begin{equation}\label{eq:harmonic_rho_def} 
\rho_i(\sigma,\tau) =  \frac{\mu(\tau)}{ \sum_{\sigma' \in A(i,\sigma) } \mu(\sigma')} \propto \mu(\tau) \enspace .
\end{equation}
\end{harmonic}
In words, when $(D,\rho,\mu)$ are harmonic $\rho_i$ assigns to each state in $A(i,\sigma)$ probability proportional to its probability under $\mu$. It is easy to see that $(D,\rho,\mu)$ are harmonic both in the celebrated algorithm of Moser and Tardos~\cite{MT} for the variable setting and in the algorithm of Harris and Srinivasan~\cite{SrinivasanPerm} for the uniform measure on permutations. What makes harmonic $(D,\rho,\mu)$ combinations special is that for any pair $(D,\mu)$, taking $\rho$ so that $(D,\rho,\mu)$ are harmonic, can be easily seen to minimize the expression 
\[
\max_{\tau \in A(i,\sigma)} \left\{\rho_i(\sigma,\tau) \, \frac{\mu(\sigma)}{\mu(\tau)}\right\} 
\]
for every $\sigma \in f_i$ \emph{simultaneously}. For atomic $D$ this suffices to minimize the charge $\gamma_i$ over all possible $\rho$.

\begin{proof}[Proof of Theorem~\ref{atomic_oracles}]
If $D$ is atomic, $\mu > 0$, and $(D,\rho)$ regenerate $\mu$ at every flaw $f_i$, it follows that for every $\tau \in \Omega$ there is \emph{exactly} one $\sigma \in f_i$ such that $\rho_i(\sigma,\tau)>0$. (And also that $\bigcup_{\sigma \in f_i} A(i,\sigma) = \Omega$). Therefore, regeneration at $f_i$ in this setting is equivalent to:
\begin{equation}\label{eq:ptoma}
\text{For every $\tau \in \Omega$ and the unique $\sigma$ such that $\tau \in A(i,\sigma)$:}\quad
 \rho_i(\sigma, \tau) = \mu(\tau)  \frac{\mu(f_i) }{ \mu(\sigma)}  \enspace .
\end{equation}
(Note that for given $D, \mu$ there may be no $\rho$ satisfying~\eqref{eq:ptoma}, as we also need that $\sum_{\tau \in A(i,\sigma)}\rho_i(\sigma,\tau) = 1$.)

Since $\rho_i(\sigma, \tau) \propto \mu(\tau)$ in~\eqref{eq:ptoma} we get~\eqref{eq:harmo_def}. Summing~\eqref{eq:ptoma} over $\tau \in A(i,\sigma)$ yields~\eqref{synthiki_atomic}.
\end{proof}

\begin{proof}[Proof of Theorem~\ref{tight}]
Lemma~\ref{lemma:regeneration}, valid for any $(D,\rho,\mu,\theta)$, readily yields the upper bound. For the lower bound, we start by recalling that in the proof of Theorem~\ref{atomic_oracles} we showed that if $D$ is atomic and $(D,\rho)$ regenerate $\mu$ at $f_i$, then $\bigcup_{\sigma \in f_i} A(i,\sigma) = \Omega$. Therefore, if $W \in \mathcal{W}_t$, since $(D,\rho)$ regenerate $\mu$ at every $f_i$, for every $\tau \in \Omega$ there exists $\Sigma^{\tau} = \sigma_1^{\tau}, \ldots, \sigma_{t+1}^{\tau}$ such that $W(\Sigma^{\tau}) =  W$ and $\sigma_{t+1}^{\tau} = \tau$. Trivially,
\[
\Pr \left[\Sigma^{\tau} \right] = 
\theta(\sigma^{\tau}_1) \prod_{i = 1}^{t} \rho_{[i]}(\sigma^{\tau}_i, \sigma^{\tau}_{i+1}) \enspace . 
\]

Since $D$ is atomic and $(D,\rho)$ regenerate $\mu$ at every flaw, \eqref{eq:ptoma} applies, yielding
\[
\rho_{[i]}(\sigma^{\tau}_i, \sigma^{\tau}_{i+1}) = 
\mu(w_i) \frac{\mu(\sigma^{\tau}_{i+1})}{\mu(\sigma^{\tau}_{i})} \enspace .
\]
Thus, by telescoping, 
\begin{equation}\label{eq:refined_lb}
\Pr[\Sigma^{\tau}] = 
\theta(\sigma^{\tau}_1) \prod_{i = 1}^{t} \mu(w_i) \frac{\mu(\sigma^{\tau}_{i+1})}{\mu(\sigma^{\tau}_{i})} =
\frac{\theta(\sigma^{\tau}_1)}{\mu(\sigma^{\tau}_{1})}   \mu(\tau)
\prod_{i = 1 }^{t} \mu(w_i)   \enspace . 
\end{equation}
Summing~\eqref{eq:refined_lb} over $\tau \in \Omega$ gives the lower bound
\[
\Pr[W_t = W] \ge \min_{\sigma \in \Omega} \frac{\theta(\sigma)}{\mu(\sigma)} \prod_{i = 1 }^{t} \mu(w_i)    \enspace .
\]
\end{proof}

\section{ Proof of Theorem~\ref{lem:master}}\label{sec:bpproof}

Per the hypothesis of Theorem~\ref{lem:master}, the sequences in $\mathcal{W}_t$ can be injected into a set of rooted forests with $t$ vertices that satisfy the properties of Definition~\ref{def:traceable}. Let $\widetilde {\mathcal{W}_t} \supseteq  \mathcal{W}_t$ be the set of \emph{all} forests with $t$ vertices  that satisfy the properties of Definition~\ref{def:traceable}. By Lemma~\ref{lemma:regeneration}, to prove the theorem it suffices to prove that  $\max_{ \sigma \in \Omega} \frac{\theta(\sigma) }{\mu(\sigma)}   \sum_{W \in \widetilde{ \mathcal{W}}_t} \prod_{i = 1}^{t} \gamma_{[i]} $ is exponentially small in $s$ for $t=T_0+s$.




 To proceed, we use ideas  from~\cite{PegdenIndepen}. Specifically, we introduce a branching process that produces only forests in   $\widetilde { \mathcal{W}_t }$  and bound $\sum_{W \in \widetilde{\mathcal{W}_t}} \prod_{i = 1}^{t} \gamma_{[i]} $ by analyzing it.  Given any real numbers $0 < \param_i < \infty$ we define $x_i = \frac{\param_i}{\param_i +1} $ and write $\mathrm{Roots}(\theta) = \mathrm{Roots}$ to simplify notation. Recall that neither the trees in each forest, nor the nodes inside each tree are ordered. To start the process we produce the roots of the labeled forest by rejection sampling as follows: For each  $i \in [m]$ independently, with probability $x_i$ we add a root with label $i$. If the resulting set of roots is in $\mathrm{Roots}$ we accept the birth. If not, we delete the roots created and try again. In each subsequent round we follow a very similar procedure. Specifically, at each step, each node $u$ with label $\ell$ ``gives birth", again, by rejection sampling: For each integer $i \in [m]$, independently, with probability $x_{i}$ we add a vertex with label $i$ as a child of $u$. If the resulting set of children of $u$ is in $\List(\ell)$ we accept the birth. If not, we delete the children created and try again. It is not hard to see that this process creates every  forest in $\widetilde { \mathcal{W}_t }$  with positive probability. Specifically, for a vertex labeled by $\ell$, every set $S \not\in\List(\ell) $ receives probability 0, while every set $S\in \List(\ell)$ receives probability proportional to
\[
w(S) = \prod_{g \in S} x_g \prod_{h \in  [m] \setminus S} \left(1- x_h\right) \enspace .
\]

To express the exact probability received by each $S\in\List(\ell)$ we define
\begin{equation}\label{eq:d_def}
Q(S)  :=   \frac{ \prod_{g \in S} x_g }{\prod_{g \in S}(1 - x_g)   }  = \prod_{g\in S} \param_g 
\end{equation}
and let $Z = \prod_{i \in   [m]  }\left(1 - x_i\right) $.
We claim that $w(S) = Q(S) \, Z $. To see the claim observe that
\[
\frac{ w(S)}
{Z}
=  \frac{ \prod_{g \in S} x_g \prod_{h \in  [m]  \setminus S} \left(1- x_h\right) }
{\prod_{i \in   [m]  }\left(1 - x_i\right) }  
=  \frac{ \prod_{g \in S} x_g }{\prod_{g \in S}(1 - x_g)   }  
= Q(S) \enspace .
\]
Therefore, each $S\in\List(\ell)$ receives probability equal to
\begin{equation}\label{eq:sing_birth}
\frac{w(S)}{\sum_{B \in \List(\ell)} w(B)}
=
\frac
{Q(S) Z}
{\sum_{B \in\List(\ell)} Q(B) Z}
=\frac{Q(S)}{\sum_{B \in\List(\ell)} Q(B)}
 \enspace .
\end{equation}

Similarly, each set $ R \in \mathrm{Roots}$ receives probability equal to $Q(R) \left( \sum_{B \in \mathrm{Roots} }Q(B)  \right)^{-1}$.

For each forest $\phi \in \widetilde{\mathcal{W}}_t$ and each node $v$ of $\phi$, let $N(v)$ denote the set of labels of its children and let $\List(v) = \List(\ell) $, where $\ell$ is the label of $v$.

\begin{lemma}\label{branchingLemma}
The branching process described above produces every forest $\phi \in \widetilde{\mathcal{W}}_t$ with probability 
\begin{align*}
p_{\phi} =  \left( \sum_{ S \in  \mathrm{Roots}}
\prod_{i \in S} \param_i  
 \right)^{-1}  \prod_{v \in \phi}  \frac{\param_v}{ \sum_{S \in\List(v)} Q(S)}
\end{align*}
\end{lemma}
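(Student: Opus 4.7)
The plan is to exploit the fact that the branching process decomposes into \emph{independent} rejection-sampling events---one for the set of root labels and one per vertex of the produced forest---and then multiply their probabilities. Equation~\eqref{eq:sing_birth} already gives, conditional on an acceptance at a vertex with label $\ell$, the probability of obtaining a specific child-label set $S \in \List(\ell)$ as $Q(S)/\sum_{B \in \List(\ell)} Q(B)$. The identical calculation applied to the initial root-sampling step gives probability $Q(R)/\sum_{B \in \mathrm{Roots}} Q(B)$ for producing root-label set $R \in \mathrm{Roots}$. Thus, writing $R(\phi)$ for the root-label set of $\phi$ and $N(v) \in \List(v)$ for the child-label set of each vertex $v$ of $\phi$, independence of the birth events gives
\[
p_\phi \;=\; \frac{Q(R(\phi))}{\sum_{B \in \mathrm{Roots}} Q(B)} \;\cdot\; \prod_{v \in \phi} \frac{Q(N(v))}{\sum_{S \in \List(v)} Q(S)} .
\]

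The next step is to simplify the numerator using $Q(S) = \prod_{g \in S} \param_g$ from~\eqref{eq:d_def}. The key combinatorial observation is that in
\[
\prod_{i \in R(\phi)} \param_i \;\cdot\; \prod_{v \in \phi} \prod_{g \in N(v)} \param_g ,
\]
every vertex $u \in \phi$ contributes its label exactly once: if $u$ is a root, via $R(\phi)$, and otherwise via the child-set $N(\mathrm{parent}(u))$ of its unique parent. So the numerator telescopes to $\prod_{v \in \phi} \param_v$. Pulling out the root term of the denominator as $\sum_{B \in \mathrm{Roots}} Q(B) = \sum_{S \in \mathrm{Roots}} \prod_{i \in S} \param_i$ then yields exactly the expression claimed in the lemma.

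There is no real obstacle here: the argument is essentially a bookkeeping exercise, once one notes (i) that the rejection events at distinct vertices are mutually independent so that their probabilities multiply, and (ii) the one-label-per-vertex accounting in the numerator. The only point that warrants a brief remark is that leaves of $\phi$ are treated uniformly with internal vertices: they correspond to birth events producing the empty child-set, which is legitimate whenever $\emptyset \in \List(\ell)$ (as is the case for $\List(i) = 2^{\Gamma_R(i)}$ and $\List(i) = \Indep(\Gamma_R(i))$ used to derive Theorems~\ref{asymmetric} and~\ref{olala}); such a leaf contributes $Q(\emptyset) = 1$ to the numerator and the normalizer $\sum_{S \in \List(v)} Q(S)$ to the denominator, consistent with the stated formula.
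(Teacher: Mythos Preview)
Your proof is correct and follows essentially the same route as the paper's own argument: write $p_\phi$ as the product of the root-generation probability and the per-vertex birth probabilities via~\eqref{eq:sing_birth}, then collapse the numerator $Q(R(\phi))\prod_{v}Q(N(v))$ into $\prod_{v\in\phi}\param_v$ by observing that every vertex's label appears exactly once (as a root or as a child of its parent). Your explicit remark about leaves contributing $Q(\emptyset)=1$ is a helpful clarification the paper leaves implicit, but otherwise the two proofs are the same.
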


\begin{proof} 
Let $R$ denote the roots of $\phi$. By~\eqref{eq:sing_birth},
\begin{align*}
p_{\phi} & =  \frac{ Q(R) }{ \sum_{S \in \mathrm{Roots}} Q(S)  }  \prod_{v \in \phi} \frac{Q(N(v))}{\sum_{S \in\List(v)} Q(S)} \\
& =  \frac{ Q(R) }{ \sum_{S \in \mathrm{Roots}} Q(S)  } \cdot \frac{\prod_{v \in \phi \setminus R} \param_v} {\prod_{v \in \phi} \sum_{S \in\List(v)} Q(S)}\\
& = \left( \sum_{S \in \mathrm{Roots}} Q(S) \right)^{-1}   \prod_{v \in \phi} \frac{\param_v }{\sum_{S \in\List(v)} Q(S)}  \enspace .
\end{align*}
\end{proof}

Notice now that
\begin{eqnarray}
\sum_{W \in \widetilde{\mathcal{W}_t } } \prod_{i =1}^{ t} \gamma_{[i] } & = & \sum_{W \in \widetilde{\mathcal{W}_t }} \prod_{i =1}^{ t}   \frac{\zeta_{[i] }\, \param_{[i] }}{ \sum_{S \in \List({[i] })} Q(S) }  \nonumber \\
& \le & \left( \max_{ i \in [m]} \zeta_i  \right)^{t} \sum_{W \in \widetilde{\mathcal{W}_t }}  \prod_{i =1}^{ t}   \frac{\param_{[i] }}{ \sum_{S \in \List([i])} Q(S)}  \nonumber \\
& = & \left( \max_{ i \in [m]} \zeta_i  \right)^{t}  \sum_{W \in \widetilde{\mathcal{W}_t } }\left( p_{W}  \sum_{S \in \mathrm{Roots} }  Q(S) \right)   \nonumber \\
& = & \left( \max_{ i \in [m]} \zeta_i  \right)^{t}  \sum_{ S \in \mathrm{Roots} } Q(S) \label{no_kagkouro_versions} \enspace .
\end{eqnarray}

Using~\eqref{no_kagkouro_versions} we see that the binary logarithm of the probability that the walk does not encounter a flawless state within $t$ steps is at most $t  \log_2 \left(  \max_{i \in F} \zeta_i \right) + T_0$, where 
\begin{eqnarray*}
 T_0 		& =& \log_2 \left( \max_{ \sigma \in \Omega } \frac{\theta(\sigma)  }{ \mu(\sigma) }\right) + \log_2 \left( \sum_{S \in \mathrm{Roots} } \prod_{i \in S} \param_i   \right)\enspace .
\end{eqnarray*}

Therefore, if $t = (T_0 + s) / \log_2 (1/ \max_{i \in F} \zeta_i) \le (T_0 + s) / \delta$, the probability that the  random walk  on $D$ does not reach a flawless state within $t$ steps is at most  $2^{-s}$.

\section{ Application to Acyclic Edge Coloring }\label{AECARA}

\subsection{Earlier Works and Statement of Result}

An edge-coloring of a graph is \emph{proper} if all edges incident to each vertex have distinct colors. A proper edge coloring is \emph{acyclic} if it has no bichromatic cycles, i.e., no cycle receives exactly two (alternating) colors. Acyclic Edge Coloring (AEC), was originally motivated by the work of Coleman et al.~\cite{coleman1,coleman2} on the efficient computation of Hessians. The smallest number of colors, $\chi'_a(G)$, for which a graph $G$ has an acyclic edge-coloring can also be used to bound other parameters, such as the oriented chromatic number~\cite{orient_col} and the star chromatic number~\cite{star_col}, both of which have many practical applications. The first general linear upper bound for $\chi'_a$ was given by Alon et al.~\cite{NogaLLL} who proved $\chi'_a(G) \le 64 \Delta(G)$, where $\Delta(G)$ denotes the maximum degree of $G$. This bound was improved to $16\Delta$ by Molloy and Reed~\cite{MRAEC} and then to $9.62(\Delta-1)$ by Ndreca et al.~\cite{Ndreca}. Attention to the problem was recently renewed due to the work of Esperet and Parreau~\cite{acyclic} who proved $\chi'_a(G) \le  4(\Delta-1)$, via an entropy compression argument, a technique that goes beyond what the LLL can give for the problem. Very recently, Giotis et al.~\cite{kirousis} improved the result of~\cite{acyclic} to $3.74 \Delta$.\smallskip 

We give a bound of $2\Delta+o(\Delta)$ for (simple) graphs of bounded degeneracy. Recall that a graph $G$ is $d$-degenerate if its vertices can be ordered so that every vertex has at most $d$ neighbors greater than itself. Thus, we not only cover a significant class of graphs, but demonstrate that our method can incorporate global graph properties. For example, if $\mathcal{G}_d$ denotes the set of all $d$-degenerate graphs, then all planar graphs are in $\mathcal{G}_5$, while all graphs with treewidth or pathwidth at most $d$ are in $\mathcal{G}_d$
(for more on degenerate graphs see~\cite{jensen}). We prove the following.
\begin{theorem}\label{Aecaki}
Every $d$-degenerate graph of maximum degree $\Delta$ has an acyclic edge coloring with $\lceil (2+\epsilon)\Delta\rceil$ colors than can be found in polynomial time, where $\epsilon = 4\sqrt{d/\Delta}$.
\end{theorem}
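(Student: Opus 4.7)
The plan is to apply Theorem~\ref{olala} to a focused local search that maintains a proper edge coloring throughout and eliminates bichromatic cycles, exploiting the $d$-degeneracy ordering. Let $q=\lceil(2+\epsilon)\Delta\rceil\ge\Delta+1$. Take the state space $\Omega$ to be the set of proper edge $q$-colorings of $G$; a starting state $\sigma_1\in\Omega$ is obtained in polynomial time from Vizing's constructive theorem, and $\theta$ is the point mass on it. Let $\mu$ be the uniform distribution on $\Omega$. For each even cycle $C$ of $G$, introduce the flaw $f_C$ consisting of proper colorings in which $C$ is bichromatic. Fix a $d$-degeneracy ordering of $V(G)$ and orient every edge toward its later endpoint, so each vertex has at most $d$ out-edges.

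For a $2k$-cycle $C$, the action addressing $f_C$ fixes a canonical perfect matching $M\subseteq E(C)$ of size $k$ (using the orientation to break symmetry) and resamples each $e\in M$ by picking a new color uniformly from the palette of colors not appearing on $e$'s adjacent edges in $E\setminus M$. This preserves properness and makes the action digraph atomic. Crucially, $(D,\rho)$ is \emph{not} a resampling oracle for $\mu$: distortion is genuine, and controlling it through degeneracy is the heart of the proof, consistent with the paper's insistence that actions need not regenerate $\mu$.

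Because $\mu$ is uniform on $\Omega$ and $D$ is atomic, the charge reduces to $\gamma_C=\max_{\sigma\in f_C}\prod_{e\in M}1/a_e(\sigma)$, where $a_e(\sigma)$ is the palette size at $e$ in state $\sigma$. To apply Theorem~\ref{olala}, take $R=C$ and $\psi_C=y^{2k}$ with $y$ to be tuned near $1/(\Delta-1)$. The causality graph is governed by edge sharing; the sum $\sum_{S\in\Indep(\Gamma_R(f_C))}\prod_{C'\in S}\psi_{C'}$ can be bounded by a product over the $2k$ edges of $C$ of convergent geometric series of the form $\sum_{k'\ge 2}(\Delta-1)^{2k'-2}y^{2k'}$, finite when $(\Delta-1)y<1$. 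Balancing the charge, the $\psi$ factors, and this sum leads to the target inequality $q\ge(2+4\sqrt{d/\Delta})\Delta$.

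The main obstacle is the charge bound: a naive $a_e(\sigma)\ge q-2(\Delta-1)$ yields only a $4\Delta$-type threshold, analogous to the $4.182(\Delta-1)$ result for general graphs advertised earlier. The $(2+4\sqrt{d/\Delta})\Delta$ threshold requires a refined analysis exploiting the orientation in two complementary ways: (i) the blocking colors at the two endpoints of $e\in M$ overlap significantly when one endpoint has few out-edges, tightening $\gamma_C$, and (ii) the number of bichromatic $2k'$-cycles through a given edge is reduced by the fact that each vertex has at most $d$ forward neighbors, tightening the geometric series above. Together these reductions tighten $\zeta_C$ enough to meet the stated bound. Polynomial running time then follows from Theorem~\ref{olala}: $\log_2(1/\mu(\sigma_1))=O(|E|\log q)$ and the geometric decay of $\psi_C$ in cycle length make the $T_0$ entropy contribution polynomial in $|V|$.
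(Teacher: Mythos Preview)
Your plan has a genuine gap: the action you propose (recoloring a perfect matching $M$ of $C$) is too weak, and the mechanism you invoke to repair it is not real.

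For a $2k$-cycle your charge is $\gamma_C\le Q^{-k}$ with $Q=q-2(\Delta-1)$, whereas the paper's action recolors $|C|-2$ edges and obtains $\gamma_C\le Q^{-(|C|-2)}$. This square in the exponent is decisive. Plugging your $\gamma_C$ into~\eqref{eq:mc3} with either your cycle-count $(\Delta-1)^{2k'-2}$ or the sharp degeneracy count of Lemma~\ref{kral} shows the condition $\zeta_C<1$ forces $Q$ to grow like $d\Delta$ (hence $\epsilon\gtrsim d$), not like $\sqrt{d\Delta}$. Your remedy~(i), that degeneracy tightens the charge because ``blocking colors at the two endpoints of $e\in M$ overlap significantly when one endpoint has few out-edges'', is unfounded: out-degree in the degeneracy orientation says nothing about which \emph{colors} appear on the incident edges in an arbitrary state $\sigma$, so the palette at $e$ is still only guaranteed to have size $q-2(\Delta-1)$. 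Your remedy~(ii) is correct in spirit (it is exactly Lemma~\ref{kral}), but by itself it cannot compensate for the missing $Q^{-k}$.

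What the paper actually does differs in two linked ways. First, the state space $\Omega$ is taken to be proper colorings \emph{with no bichromatic 4-cycle}; this removes the hardest flaws and, via Lemma~\ref{lem:2D}, guarantees at least $Q$ ``4-available'' colors at every recoloring step. Second, the action for $f_C$ keeps only two adjacent edges of $C$ and sequentially recolors the remaining $|C|-2$ edges with 4-available colors; this stays in $\Omega$, is atomic, and yields $\gamma_C\le Q^{-(|C|-2)}$. Degeneracy enters \emph{only} through the cycle-counting Lemma~\ref{kral}, which gives $g(n)\le 2\binom{2n}{n}(d\Delta)^n$; combining this with the strong charge in~\eqref{eq:GHASKFAGB} is what produces $Q>c\sqrt{d\Delta}$ and hence $\epsilon=4\sqrt{d/\Delta}$.
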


\subsection{Background}

As will become clear shortly, the main difficulty in AEC comes from the short cycles of $G$, with 4-cycles being the toughest. This motivates the following definition.
\begin{definition} 
Given a graph $G=(V,E)$ and a, perhaps partial, edge-coloring of $G$, say that color $c$ is \emph{4-forbidden for $e \in E$} if assigning $c$ to $e$ would result in either a violation of proper-edge-coloration, or in a bichromatic 4-cycle containing $e$. Say that $c$ is 4-available if it is not 4-forbidden.
\end{definition}
Similarly to~\cite{acyclic,kirousis} we use the following observation that the authors of~\cite{acyclic} attribute to Jakub Kozik. 
\begin{lemma}[\cite{acyclic}]\label{lem:2D}
In any proper edge-coloring of $G$ at most $2(\Delta-1)$ colors are 4-forbidden for any $e \in E$.
\end{lemma}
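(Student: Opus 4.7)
\medskip

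\noindent\textbf{Proof proposal.} Fix an edge $e=uv$ and the given partial proper coloring. I would separate the 4-forbidden colors into two sources: (i) colors whose assignment to $e$ would break properness, and (ii) colors whose assignment would create a bichromatic 4-cycle through $e$. Let $A$ (resp.\ $B$) be the set of colors appearing on edges incident to $u$ (resp.\ $v$) other than $e$. Since the given coloring is proper, $|A|\le \deg(u)-1\le \Delta-1$ and $|B|\le \Delta-1$, and the set of colors forbidden for reason (i) is exactly $A\cup B$, of size $|A|+|B|-|A\cap B|$.

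For reason (ii), I would analyze the structure of a would-be bichromatic 4-cycle $u\!-\!v\!-\!w\!-\!x\!-\!u$. Because consecutive edges of the cycle share a vertex, the two colors appearing on the cycle must alternate; hence, after assigning color $c$ to $e$, the cycle is bichromatic iff $\mathrm{color}(wx)=c$ and $\mathrm{color}(vw)=\mathrm{color}(xu)=:c'$ with $c'\ne c$. In particular, $c'$ appears on both a $v$-edge and a $u$-edge, so $c'\in A\cap B$. The key observation I would stress is that, by properness, each $c'\in A\cap B$ determines $w$ and $x$ uniquely: $w$ is the unique neighbor of $v$ with $\mathrm{color}(vw)=c'$, and $x$ is the unique neighbor of $u$ with $\mathrm{color}(xu)=c'$. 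Thus $c'$ contributes at most one 4-cycle-forbidden color, namely $\mathrm{color}(wx)$ when $wx\in E$. Let $C$ denote the set of colors forbidden for reason (ii); the above shows $|C|\le |A\cap B|$.

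Combining the two counts, the set of 4-forbidden colors for $e$ is $(A\cup B)\cup C$, whose cardinality is at most
\[
|A\cup B|+|C| \;\le\; \bigl(|A|+|B|-|A\cap B|\bigr)+|A\cap B| \;=\; |A|+|B| \;\le\; 2(\Delta-1).
\]
This is the desired bound; the cancellation of $|A\cap B|$ between the two sources is the essential arithmetic feature of the argument.

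\medskip

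\noindent\textbf{Where the work really is.} There is no serious obstacle: the proof is a clean double-counting/inclusion-exclusion argument, with the only substantive observation being that each $c'\in A\cap B$ accounts for at most one bichromatic 4-cycle through $e$ (by properness). The mild subtlety to get right in the write-up is that the colors in $C$ need not be disjoint from $A\cup B$, so one should phrase the bound as $|(A\cup B)\cup C|\le |A\cup B|+|C|$ rather than as a disjoint sum; the cancellation still goes through because $|C|$ is bounded by $|A\cap B|$ regardless of overlaps.
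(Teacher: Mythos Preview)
Your proof is correct and is essentially the same argument as the paper's, just with the inclusion--exclusion bookkeeping made explicit. The paper phrases it as a direct enumeration: list~(i) is your set $A$, and list~(ii) assigns to each edge $e_v$ at $v$ either its own color (when $\mathrm{color}(e_v)\notin A$) or the color of the fourth edge of the potential 4-cycle (when $\mathrm{color}(e_v)\in A$), giving $(\Delta-1)+(\Delta-1)$ without any cancellation step; but the content---that each common color $c'\in A\cap B$ accounts for at most one 4-cycle-forbidden color, which exactly offsets the overcount in $|A|+|B|$---is identical to yours.
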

\begin{proof}
The 4-forbidden colors for $e = \{u,v\}$ can be enumerated as: (i) the colors on edges adjacent to $u$, and (ii) for each edge $e_v$ adjacent to $v$, either the color of $e_v$ (if no edge with that color is adjacent to $u$), or the color of some edge $e'$ which together with $e, e_v$ and an edge adjacent to $u$ form a cycle of length $4$. 
\end{proof}

Armed with Lemma~\ref{lem:2D}, the general idea is to use a palette $P$ of size $2(\Delta -1) + Q$ colors so that whenever we (re)color an edge $e$ there will be at least $Q$ colors 4-available for $e$ (of course, coloring $e$ may create one or more bichromatic cycles of length at least 6).  At a high level, similarly to~\cite{kirousis}, our algorithm will be:
\begin{itemize}
\item
Start at a proper edge-coloring with no bichromatic 4-cycles. 
\item
While bichromatic cycles of length at least 6 exist, recolor the edges of one with 4-available colors.
\end{itemize}
Note that to find bichromatic cycles in a properly edge-colored graph we can just consider each of the $\binom{|P|}{2}$
pairs of distinct colors from $P$ and seek cycles in the subgraph of the correspondingly colored edges.

\subsection{Applying our Framework}

Given $G=(V,E)$ and a palette $P$ of $2(\Delta -1) + Q$ colors, let $\Omega$ be the set of all proper edge-colorings of $G$ with no monochromatic 4-cycle. Fix an arbitrary ordering $\pi$ of $E$ and an arbitrary ordering $\chi$ of $P$. For every even cycle $C$ of length at least 6 in $G$ fix (arbitrarily) two adjacent edges $e_1^C, e_2^C$ of $C$. \smallskip 

-- Our distribution of initial state $\theta$ assigns all its probability mass to the following $\sigma_1 \in \Omega$: color the edges of $E$ in $\pi$-order, assigning to each edge $e \in E$ the $\chi$-greatest 4-available color.

-- For every even cycle $C$ of length at least 6 we define the flaw $f_C = \{\sigma \in \Omega: C \text{ is bichromatic}\}$. Thus, a flawless $\sigma \in \Omega$ is an acyclic edge coloring of $G$.

-- The set of actions for addressing $f_C$ in state $\sigma$, i.e., $A(C, \sigma)$, comprises all $\tau \in \Omega$ that may result from the following procedure: uncolor all edges of  $C$ except for $e_1^{C},e_2^{C}$; go around $C$, starting with the uncolored edge that is adjacent to $e_2^C$, etc., assigning to each uncolored edge $e \in C$ one of the  4-available colors for $e$ at the time $e$ is considered. Thus, by lemma~\ref{lem:2D}, $|A(C,\sigma)| \ge Q^{|C|-2}$.
 
\begin{lemma}\label{atomicityclaim}
For every flaw $f_C$ and  state $\tau \in \Omega$, there  is  at most $1$ arc $\sigma \xrightarrow{C}  \tau$.
\end{lemma}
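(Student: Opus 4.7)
The plan is to show that, given $\tau$ and the cycle $C$ (with its fixed adjacent edges $e_1^C, e_2^C$), the state $\sigma$ such that $\sigma \xrightarrow{C} \tau$ is uniquely reconstructible from $\tau$ alone.

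First I would observe that the action addressing $f_C$ never touches edges outside $C$, and preserves the colors of $e_1^C$ and $e_2^C$. Therefore any $\sigma$ with an arc $\sigma \xrightarrow{C} \tau$ must agree with $\tau$ on every edge of $E \setminus (C \setminus \{e_1^C, e_2^C\})$. In particular, letting $c_1, c_2$ denote the colors that $\tau$ assigns to $e_1^C, e_2^C$, the same two colors are assigned to these edges in every such $\sigma$. Since $\sigma \in f_C$, the cycle $C$ is bichromatic in $\sigma$, so it uses exactly two colors alternately, and as $e_1^C$ and $e_2^C$ are adjacent edges of $C$, those two colors must be precisely $c_1$ and $c_2$.

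Next I would use this to identify $\sigma$ uniquely: walking around $C$ starting from $e_1^C, e_2^C$, the colors of successive edges in $\sigma$ are forced to alternate between $c_1$ and $c_2$. Combined with agreement with $\tau$ off $C$, this pins down $\sigma$ completely. Hence at most one candidate $\sigma$ exists, and atomicity follows. (Whether this candidate actually belongs to $\Omega$ and has $\tau \in A(C,\sigma)$ is irrelevant for the claim, which only asserts \emph{at most one} arc.)

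The only subtlety, and the part I would take some care over, is the legitimacy of the reconstruction in degenerate cases: I should verify that $c_1 \neq c_2$ (immediate, since $e_1^C, e_2^C$ are adjacent and $\sigma$ is a proper edge-coloring) and that the alternating pattern is well-defined on $C$ (immediate, since $|C|$ is even, so starting from the forced colors on two adjacent edges the alternation closes up consistently). With these two observations noted, the reconstruction is unambiguous and the lemma is proved.
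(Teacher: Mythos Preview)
Your proposal is correct and follows essentially the same approach as the paper: reconstruct $\sigma$ from $\tau$ by noting that only edges in $C\setminus\{e_1^C,e_2^C\}$ were recolored, so the bichromatic pattern on $C$ in $\sigma$ is forced by the colors $\tau$ assigns to $e_1^C,e_2^C$. The paper's proof is a one-line version of your argument, omitting the sanity checks you spell out about $c_1\neq c_2$ and the even-cycle consistency.
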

\begin{proof}
Given $\tau$ and $C$, to recover the previous state $\sigma$ it suffices to extend the bicoloring in $\tau$ of $e_1^C, e_2^C$ to the rest of $C$ (since $C$ was bichromatic in $\sigma$ and only edges in $C\setminus\{ e_1^{C},e_2^{C}  \}$ were recolored).
\end{proof}
Thus, taking $\mu$ to be uniform and $\rho$ such that $(D,\rho,\mu)$ is harmonic yields $\gamma_C \le Q^{-|C|+2}$. \medskip

Let $R$ be the symmetric directed graph with one vertex per flaw where $f_C \rightleftarrows f_{C'}$ iff $C \cap C' \ne \emptyset$. Since a necessary condition for $f_C$ to potentially cause $f_{C'}$ is that $C \cap C' \ne \emptyset$, we see that $R$ is a supergraph of the causality digraph. Thus, if we run the {\sc Recursive Walk} algorithm with input $R$, to apply Theorem~\ref{olala} we need to evaluate for each flaw $f_C$ a sum over the subsets of $\Gamma_R(C)$ that are independent in $R$.

For $n\ge 2$, let $g(n) = \max_{e\in E} |\{(2n+2)\mbox{-cycles in $G$ that contain $e$}\}|$. 
Let $\alpha_2,\alpha_3,\ldots$ be positive numbers such that $\beta=\sum_{n=2}^\infty \alpha_n<\infty$.
We will use $\psi_C=\psi(n)=\alpha_n/g(n)$ for a cycle $C$ of length $2n+2$.
For a set of edges $X=\{e_1,\ldots,e_k\}$, let ${\tt Ind}_X$ denote the set of all $k$-sets of cycles $S=\{C_1,\ldots,C_k\}$, 
where $e_i\in C_i$ for every $i$, and where the cycles are edge-disjoint and, therefore, independent in $R$. Then,
$$
\sum_{S\in{\tt Ind}_X}\prod_{C\in S}\psi_{C} \le    \left( \sum_{n = 2}^{ \infty } g(n) \psi(n)   \right)^{|X|} = \beta^{|X|} \enspace .
$$
Therefore, for each $(2n+2)$-cycle $C$ we can bound~\eqref{eq:mc3} as
\begin{equation}\label{eq:GHASKFAGB}
\frac{\gamma_C}{\psi_C} \sum_{S\in{\tt Ind}(\Gamma_R(C))}\prod_{C'\in S}\psi_{C'} 
\le    \frac{\gamma_C}{\psi_C} \sum_{X\subseteq C}\sum_{S\in {\tt Ind}_X}\prod_{C'\in S}\psi_{C'} 
\le \frac{\gamma_C}{\psi_C} \sum_{X\subseteq C} \beta^{|X|} 
= \frac{\gamma_C}{\psi_C} (1+\beta)^{|C|} \enspace .
\end{equation}

To bound~\eqref{eq:GHASKFAGB} we observe that for arbitrary graphs, trivially,  $g(n) \le (\Delta-1)^{2n}$. Setting $a_n = \lambda^{2n}$  and $Q = \kappa (\Delta-1) $, where $\kappa>0$ and $0 < \lambda <1$ will be specified shortly, we get
\begin{align}\label{tesseradekaksi}
\frac{  \gamma_{ C} }{ \psi_{C} }( 1 + \beta)^{ |C| }  \le \left( \frac{ \Delta-1}{ \lambda Q } \right)^{|C|-2 }    \left(1 + \frac{\lambda^4}{1-\lambda^2}\right)^{ |C| } = \left( \frac{ 1}{ \lambda \kappa } \right)^{|C|-2 }  \left(1 + \frac{\lambda^4}{1-\lambda^2}\right)^{ |C| }\enspace.
\end{align}
Choosing $(\kappa, \lambda) = (2.182,  0.569)$, the right hand side of~\eqref{tesseradekaksi}   becomes strictly less than 1 for every $|C| \ge 6$.  Thus, for general graphs, $4.182(\Delta-1)$ colors suffice to find an acyclic edge coloring efficiently.

\subsection{Graphs of Bounded Degeneracy}

We will prove the following structural lemma relating degeneracy to $g$.

\begin{lemma}
If $G\in{\cal G}_d$ has maximum degree $\Delta$, then $g(n)\le 2 \binom{2n}{n} (d\Delta)^n$.
\label{kral}
\end{lemma}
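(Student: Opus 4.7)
The plan is to bound the number of $(2n+2)$-cycles through a fixed edge $e=\{u,v\}$ by translating it into a count of paths and then exploiting the $d$-degeneracy to offset the costly $\Delta$ factors that arise from the maximum-degree bound alone.

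First I would fix a degeneracy ordering $\prec$ of $V(G)$ in which every vertex has at most $d$ later neighbors; WLOG take $u\prec v$. Each $(2n+2)$-cycle through $e$ corresponds to exactly two simple paths of length $2n+1$ from $u$ to $v$ in $G\setminus\{e\}$ (one per traversal direction), so it suffices to bound the number of such paths by $4\binom{2n}{n}(d\Delta)^n$. For any path, classify each edge traversal as \emph{up} (to a later endpoint in $\prec$) or \emph{down}; from any vertex at most $d$ up-moves are available, and at most $\Delta$ down-moves. A naive bound obtained by summing $d^{\#U}\Delta^{\#D}$ over all $\binom{2n+1}{n}$ patterns with the forced $n$ ups and $n+1$ downs gives $\binom{2n+1}{n}d^{n}\Delta^{n+1}$, which exceeds the target by roughly a factor of $\Delta/d$.

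To save this factor, I would pair consecutive edges of the path into $n$ length-$2$ walks plus one leftover edge, and bound each pair via a degeneracy-aware count. The auxiliary lemma I would establish is: for any vertex $w$, the number of length-$2$ walks $w\to x\to w'$ with $w'\neq w$ and $w'\succ w$ is at most $2d\Delta$. This splits into two cases: if the intermediate $x$ satisfies $x\succ w$, there are at most $d$ choices for $x$ and $\Delta$ for $w'$; if $x\prec w$, then $w'\succ w\succ x$, so $w'$ must be an up-neighbor of $x$, contributing $\Delta\cdot d$. This gives a clean $d\Delta$ amortized factor per ``upward'' paired step, and the combinatorial $\binom{2n}{n}$ would arise as the number of admissible sign patterns of the $n$ pairs consistent with the path closing into a $(2n+2)$-cycle through both $u$ and $v$.

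The main obstacle is carrying out the pairing precisely on an \emph{odd-length} path while respecting the cyclic constraint. In particular, the asymmetry between the $d$-bound for up-moves and the $\Delta$-bound for down-moves must be absorbed so that every pair contributes $d\Delta$ rather than a worst-case $\Delta^2$. I expect the cleanest way around this is either (i) to sum over the two traversal directions of the cycle and, for each cycle, pick the orientation whose paired decomposition is ``up-biased,'' or (ii) to induct on cycle length by opening each cycle at its maximum-degeneracy vertex $z$: both cycle-neighbors of $z$ lie among the $\le d$ up-neighbors of their respective smaller endpoints, and the induction is carried out inside the $d$-degenerate cone $G_{\prec z}$, with the edge $e$ tracked carefully across the two cases $z=v$ and $z\succ v$. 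Making the constants and combinatorial factors match $2\binom{2n}{n}(d\Delta)^n$ exactly, rather than losing a factor of $2$ or $d$ at the boundary edges near $e$ and near $z$, is the delicate technical step.
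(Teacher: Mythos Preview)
Your overall intuition is right---use the degeneracy ordering so that some of the $2n$ vertex-choices cost only $d$ rather than $\Delta$---but the pairing-into-length-$2$-walks route is a detour, and the obstacle you flag is real: your auxiliary lemma only controls pairs that end \emph{higher} than they start, and there is no reason the blocks along the path should do so. You then reach for option~(i), ``pick the traversal direction that is up-biased,'' which is in fact the whole argument, but it works \emph{without any pairing}.

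Here is what the paper does, and what your option~(i) becomes once you drop the pairing. Orient $G$ via the degeneracy ordering so that every out-degree is at most $d$. For a $(2n+2)$-cycle $C$ through $e=\{u,v\}$, consider the two traversals of the path $C\setminus e$, and for each record a string in $\{0,1\}^{2n}$: the $i$th bit is $1$ iff the $i$th non-endpoint vertex was entered along an edge whose orientation agrees with the direction of travel. Each of the $2n-1$ interior edges contributes a $1$ to exactly one of the two strings, so at least one string has $\ge n$ ones; pick that traversal, zero out surplus ones, and prepend a bit for the starting endpoint. This defines at most $2\binom{2n}{n}$ classes. Within a class, each $1$ means ``choose an out-neighbor'' ($\le d$ options) and each $0$ means ``choose any neighbor'' ($\le \Delta$ options), giving at most $d^{\,n}\Delta^{\,n}$ cycles per class.

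So the missing idea is simply that the ``up/down'' accounting should be done \emph{per edge} against a fixed acyclic orientation, not per pair, and that the two traversals of the same path are complementary on the interior edges---which is exactly what guarantees one of them has at least $n$ cheap steps. No induction, no amortized length-$2$ lemma, and no boundary headaches near $e$ are needed.
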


We let $\alpha_n=\alpha \binom{2n}{n} \lambda^n$ for some $\alpha>0$ and $\lambda\in(0,\frac{1}{4})$, to be specified. Thus, $\beta=\alpha\left(\frac{1}{\sqrt{1-4\lambda}}-1-2\lambda\right)$, since $\sum_{n=0}^\infty\alpha_n=\alpha\sum_{n=0}^\infty\binom{2n}n\lambda^n=\frac{\alpha}{\sqrt{1-4\lambda}}$ (see~\cite{binomsum}). Since $\frac{\gamma_C}{\psi_C}\le\frac{g(n)}{\alpha_n Q^{2n}}\le \frac{2(d\Delta)^n}{\alpha\lambda^n Q^{2n}}$ for any $(2n+2)$-cycle $C$, we see that~\eqref{eq:GHASKFAGB} will be less than $1$ if for every $n \ge 2$,
\[
Q > \left(\frac{2(1+\beta)^2}{\alpha}\right)^{1/2n} \cdot \frac{1+\beta}{\sqrt{\lambda}} \cdot \sqrt{d\Delta} \enspace .
\]
If we take $\alpha, \lambda$ such that $2(1+\beta)^2< (1-\delta)\alpha$, then taking $Q> \frac{1+\beta}{\sqrt{\lambda}} \cdot \sqrt{d\Delta}$ satisfies~\eqref{eq:mc3}. In particular, taking $(\alpha,\lambda)=(2.76, 0.086)$ works, in which case $\frac{1+\beta}{\sqrt{\lambda}}<4$.
Regarding the running time,  it can easily be seen that that $T_0$ is a polynomial in $|E|$, $\Delta$ and the number of colors used (since $\delta$ is a  constant).

\begin{proof}[Proof of Lemma~\ref{kral}]
Fix any edge $e=\{u,v\} \in E$. To enumerate the $(2n+2)$-cycles containing $e$ we will partition them into equivalence classes as follows. First we orient all edges of $G$ arbitrarily to get a digraph $D$. Consider now the two possible traversals of the path $C \setminus \{u,v\}$, i.e., the one starting at $u$ and the one starting at $v$. For each traversal generate a string in $\{0,1\}^{2n}$ whose characters correspond to successive vertices of the path, other than the endpoints, and denote whether the corresponding vertex was entered along an edge oriented in agreement (1) or in disagreement (0) with the direction of travel. Observe that each of the $2n-1$ edges of $C$ that have no vertex from $\{u,v\}$ will create a 1 in one string and a 0 in the other. Therefore, at least one of the two strings will have at least $\lceil (2n-1)/2\rceil = n$ ones. Select that string, breaking ties in favor of the string corresponding to starting at $u$. Then, in the selected string, convert as many of the leftmost 1s as needed to 0s, so that the resulting string has exactly $n$ ones. Finally, prepend a single bit to indicate whether the winning traversal started at $u$ or to $v$. The resulting string is the representative of $C$'s equivalence class. Clearly, there are at most $2\binom{2n}n$ equivalence classes.

To bound the number of cycles in a class we enumerate the possibilities for the $2n$ vertices other that $u,v$, as follows. After reading the bit indicating whether the cycles in the class start at $u$ or at $v$, we interpret each successive character of the representative string to indicate whether we can choose among the out-neighbors or among all neighbors of the current vertex. By the string's construction, there will be exactly $n$ choices of each kind and, therefore, the total number of choices will be at most $\mathrm{Out}^n \Delta^n$, where  $\mathrm{Out}$ is an upper bound on the out-degree of $D$.

To conclude the argument we note that since $G \in \mathcal{G}_d$ we can direct its edges so that every vertex has out-degree at most $d$ by repeatedly removing any vertex $v$ of current degree at most $d$ (it always exists) and, at the time of removal, orienting its current neighbors away from $v$.
\end{proof}

\section*{Acknowledgements}
We are grateful to Dan Kral for providing us with Lemma~\ref{kral} and to Louis Esperet for pointing out an error in our application of Theorem~\ref{olala} to yield Theorem~\ref{Aecaki} in a previous version of the paper. FI is thankful to Alistair Sinclair for many fruitful conversations.

\bibliographystyle{plain}
\bibliography{kolmo}

%

\end{document}